\newcommand{\size}[1]{\ensuremath{\lVert #1\rVert}\xspace}
\renewcommand{\sub}{\ensuremath{\mn{sub}}\xspace}
\newcommand{\tp}{\mn{tp}\xspace}
\newcommand{\roleOne}{r}
\newcommand{\roleTwo}{s}
\newcommand{\roleThree}{s'}
\newcommand{\completion}{\mathsf{cp}}
\newcommand{\separ}{\mathsf{Sep}}
\DeclareRobustCommand{\DLLogo}{%
  \begingroup\normalfont
  \kern-1.75pt\includegraphics[align=c,height=1.25\baselineskip]{dl}\kern-1.5pt%
  \endgroup
}
\newtheorem{theorem}{Theorem}
\newtheorem{lemma}[theorem]{Lemma}
\newtheorem{example}[theorem]{Example}
\begin{document}

%%
%% Rights management information.
%% CC-BY is default license.
\copyrightyear{2025}
\copyrightclause{Copyright for this paper by its authors.
  Use permitted under Creative Commons License Attribution 4.0
  International (CC BY 4.0).}

%%
%% This command is for the conference information
\conference{}
% \conference{\DLLogo{} DL 2025: 38th International Workshop on Description Logics,
%   September 3--6, 2025, Opole, Poland}

%%
%% The "title" command
\title{Computation of Interpolants for Description Logic Concepts in Hard Cases}

%%
%% The "author" command and its associated commands are used to define
%% the authors and their affiliations.
\author[1]{Jean Christoph Jung}[%
orcid=0000-0002-4159-2255,
email=jean.jung@tu-dortmund.de
]
\address[1]{TU Dortmund University, Germany}

\author[2]{Jędrzej Kołodziejski}[%
orcid=0000-0001-5008-9224,
email=jedrzej.kolodziejski@tu-dortmund.de
]
\address[2]{TU Dortmund University, Germany}

\author[3]{Frank Wolter}[%
orcid=0000-0002-4470-606X,
email=wolter@liverpool.ac.uk
]
\address[3]{University of Liverpool, UK}

\newcommand{\nb}[1]{\textcolor{red}{!}\marginpar{\color{red}#1}}

%%
%% The abstract is a short summary of the work to be presented in the
%% article.
\begin{abstract}
While the computation of Craig interpolants for description logics (DLs) with the Craig Interpolation Property (CIP)
is well understood, very little is known about the computation and size of interpolants for DLs without CIP or if one aims at interpolating concepts in a weaker DL than the DL of the input ontology and concepts. In this paper, we provide the first elementary algorithms computing (i) $\mathcal{ALC}$ interpolants between $\mathcal{ALC}$-concepts under $\mathcal{ALCH}$-ontologies and (ii) $\mathcal{ALC}$ interpolants between $\mathcal{ALCQ}$-concepts under $\mathcal{ALCQ}$-ontologies. The algorithms are based on recent decision procedures for interpolant existence.
We also observe that, in contrast, uniform depth restricted interpolants might be of non-elementary size.
\end{abstract}

%%
%% Keywords. The author(s) should pick words that accurately describe
%% the work being presented. Separate the keywords with commas.
\begin{keywords}
  Interpolation \sep
  Separation \sep
  Description Logic
\end{keywords}

%%
%% This command processes the author and affiliation and title
%% information and builds the first part of the formatted document.
\maketitle

\section{Introduction}
Interpolants between description logic (DL) concepts have found many applications. For instance, they can be used as explicit concept definitions or referring expressions, as explanations for concept inclusions, as rewritings of queries, and as separating concepts in the context of concept learning~\cite{TenEtAl06,ArtEtAl21,TomWed21,DBLP:journals/ai/JungLPW22,DBLP:conf/kr/JungLPW21}. The computation of interpolants has been investigated extensively, both by the DL community~\cite{TenEtAl13,KR2022-16,DBLP:conf/ijcai/LyonK24,DBLP:conf/aaai/KoopmannS15} but also in modal logic and related fragments of FO~\cite{DBLP:journals/lmcs/Sofronie-Stokkermans08,DBLP:journals/tocl/BenediktCB16,DBLP:journals/apal/FittingK15}. We quickly remind the reader how this is done: A Craig interpolant between $C$ and $D$ is a concept $E$ in the shared signature of $C$ and $D$
such that $\models C \sqsubseteq E$ and $\models E \sqsubseteq D$ (for simplicity we drop the ontology). A DL has the Craig Interpolation Property (CIP), if the existence of such an interpolant follows from $\models C \sqsubseteq D$. DLs such as $\mathcal{ALC}$, $\mathcal{ALCQ}$, and $\mathcal{ALCI}$ have the CIP~\cite{TenEtAl13}. Then, an interpolant $E$ can typically be extracted from a proof of $\models C \sqsubseteq D$ (or, equivalently, of non-satisfiability of $C \sqcap \neg D$) in standard calculi in the field such as tableau, the chase, sequent calculi, or type elimination~\cite{TenEtAl13,KR2022-16,DBLP:conf/ijcai/LyonK24,DBLP:journals/tocl/BenediktCB16,baldernote}.

The situation is very different for DLs that do not enjoy the CIP or if one is interested in interpolating concepts in a weaker DL than the concepts used in the inclusion. In this case, the existence of an interpolating concept does not follow from the validity of the inclusion and extracting interpolating concepts from proofs becomes much harder. In fact, very little is known about how this could be done and research has so far focused on deciding the existence of interpolants rather than constructing them~\cite{DBLP:conf/lics/JungW21,DBLP:journals/tocl/ArtaleJMOW23,DBLP:conf/kr/KuruczWZ23}. It is worth noting, however, that for extensions of $\mathcal{EL}$, the chase can be used to compute interpolants even without CIP~\cite{KR2022-16}.

It is well known that Craig interpolants of \ALC concept inclusions under \ALCH ontologies do not necessarily exist~\cite{KonevLWW09,TenEtAl13} and that not every \ALCQ concept inclusion has an \ALC interpolant (take $\models C\sqsubseteq C$ for any \ALCQ concept $C$ not equivalent to an \ALC concept). Existence of \ALC interpolants in these settings is, however, decidable~\cite{DBLP:journals/tocl/ArtaleJMOW23,kuijer2025separationdefinabilityfragmentstwovariable}. To explain the proof, assume that $\models C \sqsubseteq D$ and let $\Sigma$ be any signature (again we drop the ontology for simplicity). It is known that an interpolating $\ALC(\Sigma)$ concept exists if no
$\Sigma$-bisimilar nodes satisfying $C$ and $\neg D$ exist. Hence it suffices to decide
whether a pair of concepts is satisfiable in $\Sigma$-bisimilar nodes. 
It turns out that to decide this problem it is crucial to decide the more general problem 
whether a \emph{set} of concepts (and not just a pair) is satisfiable in mutually $\Sigma$-bisimilar nodes. By completing concepts to types containing them, it suffices to decide the latter problem for sets of types, often called \emph{mosaics}. In fact, the decision algorithms in
~\cite{DBLP:journals/tocl/ArtaleJMOW23,kuijer2025separationdefinabilityfragmentstwovariable}
use mosaics and generalize the well-known \emph{type elimination procedures} deciding satisfiability of concepts to \emph{mosaic elimination procedures} deciding $\Sigma$-bisimilar satisfiability.

Mosaic elimination procedures decide the existence of interpolants, but they do not construct any interpolants. The aim of this paper is to give the first elementary algorithms constructing $\mathcal{ALC}$ interpolants 
whenever they exist under ontologies in $\mathcal{ALCH}$ and under ontologies and concept inclusions in $\mathcal{ALCQ}$. Our algorithms are not restricted to computing Craig interpolants, but work for arbitrary signatures. The idea of the algorithms is to run the mosaic elimination procedures discussed above and construct, in addition and inductively, for each eliminated mosaic entailed $\mathcal{ALC}(\Sigma)$ concepts that witness non $\Sigma$-bisimilar satisfiability of its types. The witness concepts we propose are not aggregated at each step, but are polyadic in the sense that we define, for any set $T$ of concepts (types in the case of mosaics) which are not satisfiable in $\Sigma$-bisimilar nodes, for each $C\in T$
an $\ALC(\Sigma)$ concept $\separ(C)$ such that the following holds: 
\begin{itemize}
    \item $\models C\sqsubseteq \separ(C)$ for all $C \in T$;
    \item $\models \bigsqcap_{C\in T}\separ(C) \sqsubseteq \bot$.
\end{itemize}
The concept $\separ(C)$ constructed for $T=\{C,\neg D\}$ is then the desired interpolant. We note that an earlier attempt
to construct interpolants while running a mosaic elimination procedure without using polyadic separators does not work as stated~\cite{DBLP:journals/tocl/ArtaleJMOW23}. Hence one main contribution of this paper is to correct that proof. Our second main contribution is to show that our approach also works in the case of \ALCQ ontologies. 

% {\color{blue} Should we add a sentence about the commonality between the two procedures that define $\separ(C)$ for all but one $C\in T$ and for the remaining just take the negation?}
% We implement both ideas by generalizing the well-known \emph{type elimination procedures} deciding satisfiability of concepts to \emph{mosaic elimination procedures} constructing interpolants. Alternatively, they could be implemented using tableau-based approaches. 
% The idea of using type elimination to compute interpolants for logics with CIP is due to~\cite{Balder,Beenedikt}.    
% \begin{itemize}
% 
%     \item story line
%     
%     \item motivation: failed approach via strongest consequence (non-existent or non-elementary)
%     
%     \item explanation why ToCL construction fails, if possible
% 
%     \item add Jedreks example for hierarchies case from the draft (in intro or later)
%  
% \end{itemize}

\section{Preliminaries}

We first introduce the syntax and semantics of the basic description
logics~\ALC, \ALCH, and \ALCQ and introduce some model theory. We refer the reader to~\cite{DL-Textbook} for a comprehensive introduction to description logics.
Let \NC, and~\NR
% , and~\NI
be mutually disjoint and countably infinite sets
of \emph{concept}, and \emph{role names}. %, and \emph{individual names}.
% A \emph{role} is a role name $s$, or an \emph{inverse role}
% $s^-$, with $s$ a role name and
% $(s^-)^- = s$.
% We use $u$ to denote the \emph{universal role}.
% A \emph{nominal} takes
% the form $\{a\}$, with $a$ an individual name.
An \emph{\ALCQ concept} is defined according to the
syntax rule
\[
C, D ::= \top \mid A \mid \neg C \mid C \sqcap D \mid (\geq n\ r.C)
\]
where $A$ ranges over concept names, $r$ over role names, and $n\geq 0$. We use the standard abbreviations $\exists r.C$ for $(\geq 1\ r.C)$, $\forall r.C$ for $\neg\exists r.\neg C$, $C\sqcup D$ for $\neg(\neg C\sqcap \neg D)$, and $C\to D$ for $\neg C\sqcup D$.
An \emph{\ALC concept} is an \ALCQ concept in which for every subformula $(\geq 1\ r.C)$, $n$ is actually $1$.
An \emph{\ALCQ concept inclusion (\ALCQ CI)} takes the form $C \sqsubseteq D$ for \ALCQ concepts $C$ and $D$. \emph{\ALC concept inclusions} are defined accordingly. An \emph{\ALCQ ontology} is a finite set of \ALCQ CIs. An \emph{\ALCH ontology} is a finite set of \ALC concept inclusions and \emph{role inclusions (RIs)} $r\sqsubseteq s$ where $r,s$ are role names from~\NR. 
The \emph{size} of a
(finite) syntactic object $X$, denoted $\size{X}$, is the number of
symbols needed to represent it as a word, and 
the \emph{role depth} of a concept is the maximal nesting depth of concept constructors $(\geq n\ r.C)$.
%An \ALCH concept is simply an \ALC concept. 
%We drop the
%reference to \ALC/\ALCQ/\ALCH if no confusion can arise.

As usual, the semantics is defined in terms of \emph{interpretations}
$\Imc=(\Delta^\Imc,\cdot^\Imc)$,
where $\Delta^{\mathcal{I}}$ is a non-empty set, called \emph{domain} of $\mathcal{I}$,
%(often denoted simply by $\Delta$)
%, if no confusion can occur),
and $\cdot^{\mathcal{I}}$ is a function mapping
every $A \in \NC$ to a subset of $A^\Imc\subseteq \Delta^{\Imc}$ and
every $r\in\NR$ to a subset of $r^\Imc\subseteq \Delta^{\Imc} \times
\Delta^{\Imc}$.
% the universal role $u$ to $\Delta^{\Imc} \times \Delta^{\Imc}$,
% and every $a \in\NI$ to an element in $\Delta^{\Imc}$.
The \emph{extension $C^{\mathcal{I}}$ of a concept
$C$ in $\mathcal{I}$} is defined as usual. 
%
%\begin{align*}
%		\top^{\mathcal{I}} & = \Delta^{\mathcal{I}}, \\
%		\neg C^{\mathcal{I}} & = \Delta^{\Imc} \setminus C^{\mathcal{I}}, \\
%		(C \sqcap D)^{\mathcal{I}} & = C^{\mathcal{I}} \cap D^{\mathcal{I}} \\ 
%		%(C \sqcup D)^{\mathcal{I}} & = C^{\mathcal{I}} \cup D^{\mathcal{I}}, & \\
%        (\geq n\ r.C)^{\mathcal{I}} & = \{d \in \Delta^{\Imc} \mid \sharp\{e\in C^\Imc\mid %(d,e) \in r^\Imc\} \geq n \} 
		% (\exists r.C)^{\mathcal{I}} & = \{d \in \Delta^{\Imc} \mid \text{there exists $(d,e) \in r^\Imc$ with $e \in C^\Imc$ }\}, \\
		% (\forall r.C)^{\mathcal{I}} & = \{d \in \Delta^{\Imc} \mid \text{for all $(d,e) \in r^\Imc$ with $e \in C^\Imc$ }\}.
%\end{align*}
%as usual,
%see~\cite{DBLP:books/daglib/0041477}. {\color{red} do}
An interpretation \Imc
\emph{satisfies} a CI $C \sqsubseteq D$ if $C^\Imc \subseteq
D^\Imc$ and an RI $r\sqsubseteq s$ if $r^\Imc\subseteq s^\Imc$.
We say that \Imc is a \emph{model} of
an ontology $\Omc$ if it satisfies all inclusions
in it. A concept $C$ is \emph{satisfiable under ontology \Omc} if there is a model \Imc of \Omc with $C^\Imc\neq\emptyset$. 
% A CI $C\sqsubseteq D$ is \emph{entailed} by an ontology $\Omc$, in symbols $\Omc\models C\sqsubseteq D$, if every model of $\Omc$ satisfies $C\sqsubseteq D$. 
Moreover, \emph{$C$ is subsumed by another concept $D$ under \Omc} if $C^\Imc\subseteq D^\Imc$ in every model \Imc of $\Omc$. We write $\Omc\models C\sqsubseteq D$ in this case. 
% We write $\Omc\models C \equiv D$ if $\Omc \models C \sqsubseteq D$ and $\Omc\models D \sqsubseteq C$, and call $C$ and $D$ equivalent under \Omc. 
% We drop $\Omc$ if it is empty and write $\models C \sqsubseteq D$ for $\emptyset\models C \sqsubseteq D$.\nb{J: don't think this is used}

We next introduce the studied notions and associated problems. A \emph{signature} is a set $\Sigma$ of concept and role names. An \emph{$\ALC(\Sigma)$ concept} is an \ALC concept that uses only concept and role names from $\Sigma$.  Let $\Lmc,\Lmc'$ be DLs, and let us fix an $\Lmc$ ontology \Omc, \Lmc concepts $C,D$, and a signature $\Sigma$. Then, an \emph{$\Lmc'(\Sigma)$ interpolant for $\Omc\models C\sqsubseteq D$} is an $\Lmc'(\Sigma)$ concept $E$ with 
$\Omc\models C\sqsubseteq E$ and $\Omc\models E\sqsubseteq D$. % Moreover, an \emph{$\Lmc'(\Sigma)$-separator for $\Omc \models C\sqsubseteq D$} is an $\Lmc'(\Sigma)$ concept $E$ with  $\Omc\models C\sqsubseteq E$ and $\Omc\models E\sqsubseteq D$. Thus, for interpolants, we consider intermediate concepts with restricted signature while for separators, we additionally restrict the language. We study interpolant computation for \ALCH and separator computation for \ALCQ with separators formulated in \ALC.
The associated decision problem of \emph{$\Lmc'(\Sigma)$ interpolant existence} over $\Lmc$ ontologies and concepts has been recently studied and shown decidable~\cite{DBLP:journals/tocl/ArtaleJMOW23,kuijer2025separationdefinabilityfragmentstwovariable}. The decision procedures are based on elegant model-theoretic characterizations of interpolant existence in terms of bisimulations, which we introduce next.  
A relation $Z \subseteq \Delta^{\Imc}\times \Delta^{\Jmc}$ is a
\emph{$\Sigma$-bisimulation} between interpretations $\Imc$ and $\Jmc$ if
the following conditions are satisfied for all $(d,e)\in Z$: 
\begin{description}
	\item[Atom] for all concept names $A\in \Sigma$: $d\in A^{\Imc}$ iff $e\in A^{\Jmc}$,
	\item[Back] for all role names $r\in \Sigma$ and all $(d,d')\in r^{\Imc}$, there is $(e,e')\in r^{\Jmc}$ such that $(d',e')\in Z$,

	\item[Forth] for all role names $r\in \Sigma$ and all $(e,e')\in r^{\Jmc}$, there is $(d,d')\in r^{\Imc}$ such that $(d',e')\in Z$.
\end{description}
A \emph{pointed interpretation} is a pair $\Imc,d$ with \Imc an interpretation and $d\in \Delta^{\Imc}$. We write $\Imc,d
\sim_{\mathcal{ALC},\Sigma}\Jmc,e$ and call $\Imc,d$ and $\Jmc,e$
\emph{$\Sigma$-bisimilar} if there exists an
$\Sigma$-bisimulation $Z$ such that
$(d,e)\in Z$.
% 
% A \emph{pointed interpretation} is a pair $\Imc,d$ with \Imc an interpretation
% and $d\in \Delta^{\Imc}$. 
% 
% Since in the present paper we concentrate on $\Lmc'=\ALC$, we recall the characterization only for this case. Let again \Omc be an \Lmc ontology and $\Sigma$ be a signature. 
We say that \ALCQ concepts $C_1,C_2$ are \emph{jointly $\sim_{\ALC,\Sigma}$-consistent under \Omc} if there are models $\Imc_1,\Imc_2$ of \Omc and elements $d_i\in C_i^{\Imc_i}$ for $i=1,2$ with $\Imc_1,d_1\sim_{\ALC,\Sigma}\Imc_2,d_2$. We have the following characterization: 

\begin{lemma}\label{lem:criterion}
  Let $\Lmc\in\{\ALCH,\ALCQ\}$, $\Omc$ be an \Lmc ontology, $C,D$ be \Lmc-concepts, and $\Sigma$ be a signature. Then the following are equivalent:
  \begin{enumerate}
  
      \item there is an $\ALC(\Sigma)$ interpolant for $\Omc\models C\sqsubseteq D$;
      \item $C,\neg D$ are not jointly $\sim_{\ALC,\Sigma}$-consistent under \Omc.
      
  \end{enumerate}
\end{lemma}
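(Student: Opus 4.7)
The plan is to prove the two directions separately: $(1)\Rightarrow(2)$ is immediate from bisimulation invariance of $\ALC(\Sigma)$ concepts, while $(2)\Rightarrow(1)$ is obtained, via first-order compactness, $\omega$-saturation, and the Hennessy--Milner theorem for $\ALC$, by a standard model-theoretic construction. Throughout I rely on the fact that $\ALCH$ and $\ALCQ$ ontologies and concepts translate into ordinary first-order formulas via the standard translation, so FO compactness applies and FO elementary extensions of models of $\Omc$ are again models of $\Omc$.

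For $(1)\Rightarrow(2)$, I take an $\ALC(\Sigma)$ interpolant $E$ and suppose for contradiction that $\Imc_1,d_1$ and $\Imc_2,d_2$ are witnesses to joint $\sim_{\ALC,\Sigma}$-consistency of $C$ and $\neg D$ under $\Omc$. Then $d_1 \in E^{\Imc_1}$ because $\Omc \models C \sqsubseteq E$, and bisimulation invariance of $\ALC(\Sigma)$ concepts yields $d_2 \in E^{\Imc_2}$, whence $d_2 \in D^{\Imc_2}$ by $\Omc \models E \sqsubseteq D$, contradicting $d_2 \in (\neg D)^{\Imc_2}$.

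For the contrapositive of $(2)\Rightarrow(1)$, assume no $\ALC(\Sigma)$ interpolant exists and put $T_C = \{E \in \ALC(\Sigma) \mid \Omc \models C \sqsubseteq E\}$. The first key observation is that $\Omc \cup T_C \cup \{\neg D(c_2)\}$ is FO-satisfiable: otherwise compactness produces finitely many $E_1,\ldots,E_n \in T_C$ witnessing $\Omc \models E_1 \sqcap \cdots \sqcap E_n \sqsubseteq D$, so $\bigsqcap_i E_i$ would be an interpolant. Pick a model, pass to an $\omega$-saturated elementary extension $\Imc_2^*,d_2$, and let $\pi$ be the complete $\ALC(\Sigma)$-type of $d_2$ in $\Imc_2^*$. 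Symmetrically, $\Omc \cup \{C(c_1)\} \cup \pi$ is FO-satisfiable: otherwise compactness gives $E_1,\ldots,E_n \in \pi$ with $\Omc \models C \sqsubseteq \neg(E_1 \sqcap \cdots \sqcap E_n)$, placing $\neg(E_1 \sqcap \cdots \sqcap E_n)$ in $T_C$ and hence in $\pi$, contradicting $E_i \in \pi$. Pick a model, take an $\omega$-saturated elementary extension $\Imc_1^*,d_1$, and observe by maximality of complete types that $\tp_\Sigma(\Imc_1^*,d_1) = \pi = \tp_\Sigma(\Imc_2^*,d_2)$.

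To conclude I invoke the Hennessy--Milner theorem for $\ALC$: on $\omega$-saturated interpretations, any two points satisfying the same $\ALC(\Sigma)$ concepts are $\sim_{\ALC,\Sigma}$-bisimilar, which produces the required witnesses. The subtlest point is this last step, because the back-and-forth must use only $\ALC(\Sigma)$ constructors while the ontology may contain $\ALCH$ or $\ALCQ$ features. This is, however, unproblematic: FO $\omega$-saturation suffices to realize, at $r$-successors with $r\in\Sigma$, arbitrary finite conjunctions of $\ALC(\Sigma)$-concepts drawn from a desired type, which is precisely what the standard back-and-forth needs, regardless of any richer constructors appearing in $\Omc$ or in $C,D$.
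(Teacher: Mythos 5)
Your proof is correct and takes essentially the same approach the paper alludes to: the paper gives no explicit proof of this lemma but states that it "is based on the fact that $\Sigma$-bisimulations capture the expressive power of $\ALC(\Sigma)$ concepts, and crucially relies on the use of compactness," which is precisely the standard model-theoretic argument you supply via compactness, $\omega$-saturated elementary extensions, and the Hennessy--Milner property.
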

The proof of Lemma~\ref{lem:criterion} is based on the fact that $\Sigma$-bisimulations capture the expressive power of $\ALC(\Sigma)$ concepts,
% model-theoretic characterization of \ALC provided in Lemma~\ref{lem:modal-equivalence}, 
and crucially relies on the use of compactness. In particular, it is not constructive in the sense that in the proof of implication $2\Rightarrow 1$, no interpolant is constructed.
We study here the associated computation problems, that is, compute the interpolants if they exist.  
A notion dual to the notion of an interpolant is that of a \emph{separator}. Given concepts $C,D,E$ we call $E$ a separator for $C,D$ if $C\sqsubseteq E$ and $E\sqsubseteq \neg D$. Clearly, $E$ is a separator for $C,D$ iff it is an interpolant for $C,\neg D$. Thus, the problems of finding interpolants and separators for a given pair of concepts are algorithmically equivalent. We will switch between these two perspectives depending on which one is more convenient in a given context.

% We recall some model-theoretic notions and characterizations.  For pointed interpretations $\Imc,d$
% and $\Jmc,e$ and a signature $\Sigma$, we write $\Imc,d \equiv_{\ALC,\Sigma}\Jmc,e$ and say that
% $\Imc,d$ and $\Jmc,e$ are \emph{$\ALC(\Sigma)$-equivalent} if $d$ and $e$ satisfy the same $\ALC(\Sigma)$ concepts, that is, $d\in C^{\Imc}$ iff $e\in C^{\Jmc}$, for all $\ALC(\Sigma)$ concepts $C$. $\ALC(\Sigma)$-equivalence is conveniently characterized via bisimulations which we introduce next.
% 
%  
% The next lemma provides the promised model-theoretic characterizations of $\ALC(\Sigma)$-equivalence~\cite{TBoxpaper,goranko20075}. It uses the notion of \emph{$\omega$-saturatedness} which will however not be important in the rest of the paper; we refer the reader to~\cite{modeltheory}.
% %
% \begin{lemma}\label{lem:modal-equivalence}
% 	%
%     Let $\Sigma$ a signature, and let $\mathcal{I},d$ and $\mathcal{J},e$ be pointed interpretations
% 	which are $\omega$-saturated. Then $\Imc,d\equiv_{\ALC,\Sigma} \Jmc,e$ iff $\Imc,d\sim_{\ALC,\Sigma} \Jmc,e$. The if-direction holds without $\omega$-saturatedness.\nb{J: candidate for dropping?}
% 	%Then
% 	%\[
% 	% \Imc,d \equiv_{\mathcal{L},\Sigma} \Jmc,e \quad \text{iff} \quad
% 	%	\Imc,d \sim_{\mathcal{L},\Sigma}\Jmc,e.
% 	%      \]
% \end{lemma}
% 

\section{Role Inclusions}\label{sec:ALCH}

In this section, we are concerned with computing \ALC interpolants of concept inclusions under \ALCH ontologies. We start with an example that illustrates the failure of the computation algorithm given in~\cite{DBLP:journals/tocl/ArtaleJMOW23}.   

\begin{example}
  Fix $k\geq 1$, $\Omc=\{\roleOne\sqsubseteq\roleTwo_i,\roleTwo_i\sqsubseteq\roleThree_i\mid i\leq k\}$, $\Sigma=\{\roleThree_i, A_i\mid i\leq k\}$, and
  \begin{align*}
    C=\exists{\roleOne}.B\sqcap \forall \roleOne.(B\to \bigsqcup_{i\leq k}A_i) \hspace{1cm} \text{and} \hspace{1cm} D=\bigsqcap_{i\leq k}\forall{\roleTwo_i}.\neg A_i.
  \end{align*}
%It is easy to see that $\Omc\models C\sqsubseteq \neg D$. 
  %To see this, let $\Imc$ be a model of $\Omc$, and let $d\in (C\sqcap D)^\Imc$. %Clearly, $d$ must have an $\roleOne$-successor $e$ satisfying $A_i$ for some $i$. %But since $\Imc\models\Omc$, this $e$ is also an $\roleTwo_i$-successor of $d$, which implies $e\in (\neg A_i)^\Imc$, a contradiction.
We show that $C,D$ are not jointly $\sim_{\ALC,\Sigma}$-consistent under \Omc. 
Indeed, if $C,D$ are jointly $\sim_{\ALC,\Sigma}$-consistent, then all concepts in $S=\{B\sqcap (B\to \textstyle\bigsqcup_{i\leq k}A_i)\}\cup\{\neg A_i\mid i\leq k\}$ are satisfied in mutually $\Sigma$-bisimilar nodes, which is clearly not the case. By Lemma~\ref{lem:criterion}, there is an $\ALC(\Sigma)$ interpolant for $\Omc\models C\sqsubseteq \neg D$. For instance,  $E=\bigsqcup_{i\leq k}\exists{\roleThree_i}.A_i$ is an $\ALC(\Sigma)$ interpolant. 
  The algorithm from~\cite{DBLP:journals/tocl/ArtaleJMOW23}, however, computes a concept of shape $\exists s_i'.E$ for a single $i\leq k$ and one can easily see that a concept of this shape cannot serve as an interpolant. The mistake in the algorithm is a confusion in the quantifier order in the assumptions of the interpolant construction. 
  % But since $S$ is inconsistent and only uses symbols from $\Sigma$, it is not jointly $\sim_{\ALC,\Sigma}$-consistent.
  % It is worth noting that every proper subset of $S$ \emph{is} consistent. Thus, to detect joint inconsistency of $C,D$ we really need to look at multiple types at once.
\end{example}

We first show how a natural idea for computing interpolants, which works in several other settings, fails to compute elementary sized interpolants in the presence of role inclusions. Next we provide an algorithm which does compute elementary interpolants. % We start with~(i). 

% Let $\Omc$ be an \ALCH ontology, $C_0,D_0$ be \ALC concepts, and $\Sigma$. 
A natural idea to compute interpolants could be to show first that, if there is an $\ALC(\Sigma)$ interpolant for $\Omc\models C_0\sqsubseteq D_0$, then there is one of small role depth $n$, and then use the strongest \ALC consequence of $C_0$ of this role depth~$n$. 
Let $n\geq 0$. A \emph{$(\Sigma,n)$-uniform interpolant of $C_0$ under \Omc} is an $\ALC(\Sigma)$ concept $U$ such that $\Omc\models C_0\sqsubseteq U$ and $\Omc\models U\sqsubseteq E$ for every $\ALC(\Sigma)$ concept $E$ of role depth at most $n$ with $\Omc\models C_0\sqsubseteq E$. 
% In case $U$ is a $(\Sigma,n)$-uniform interpolant for all $n$ we call it a \emph{$\Sigma$-uniform interpolant}.
% It is not difficult to see that $(\Sigma,n)$- and $\Sigma$-uniform interpolants of $C_0$ under \Omc are unique up to equivalence.
% Unlike $\Sigma$-uniform interpolants, 
A $(\Sigma,n)$-uniform interpolant for $C_0$ under \Omc always exists, and can be used as an interpolant for $\Omc\models C_0\sqsubseteq D$ whenever an $\ALC(\Sigma)$ interpolant for $\Omc\models C_0\sqsubseteq D$ of role depth $n$ exists.
This idea has been used to compute elementary sized modal logic interpolants of $\mu$-calculus formulae~\cite{DBLP:conf/stacs/JungK25} and it follows from its proof that it applies to computing interpolants under \ALC ontologies. We actually conjecture that it works for the majority of DLs enjoying the CIP, but we leave an elaboration for future work. Unfortunately, contrary to these settings, in our case $(\Sigma,n)$-uniform interpolants need not be elementary in $n$, and consequently do not lead to elementary sized interpolants. 
In what follows we denote by \Tower the iterated exponential function, that is, $\Tower(0)=1$ and $\Tower(n+1)=2^{\Tower(n)}$.

\begin{theorem}\label{thm:uniform}
  There is an \ALCH ontology \Omc, an \ALC concept $C_0$, and signature $\Sigma$ such that there is no $(\Sigma,n)$-uniform interpolant of $C_0$ under \Omc smaller than $\Tower(n-2)$.
  % Moreover, there are concepts $C_1,C_2,...$ such that each $C_n$ (i) has size exponential in $n$ and (ii) has a $\Sigma$-uniform interpolant but only of size at least $\Tower(n-2)$.
\end{theorem}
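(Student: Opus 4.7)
The plan is to build $\Omc$, $C_0$, and $\Sigma$ by extending the role-hierarchy schema of the preceding example, namely $\Omc = \{r\sqsubseteq s_i,\ s_i\sqsubseteq s'_i\mid i\leq m\}$ with $\Sigma = \{s'_i, A_i,\ldots\mid i\leq m\}$, and iterating the resulting ``fan-out'' effect along an $r$-chain of depth $n$. First I would reduce the theorem to a combinatorial question: by the usual Hennessy--Milner-style characterisation of $\ALC(\Sigma)$ up to role depth~$n$, a $(\Sigma,n)$-uniform interpolant $U$ may be taken of role depth $\leq n$, and any two such interpolants differ at most syntactically; moreover $U$ defines exactly the union of those depth-$n$ $\Sigma$-bisimilarity classes that contain a pointed model $(\Imc,d)$ with $\Imc\models\Omc$ and $d\in C_0^{\Imc}$. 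So the task reduces to exhibiting $C_0$ such that no $\ALC(\Sigma)$ concept of size $<\Tower(n-2)$ defines this union.

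The construction of $C_0$ uses role inclusions to force non-elementary branching in $\Sigma$-bisimilar unfoldings. Because $r\sqsubseteq s'_i$ for every $i$, every $r$-successor of $d$ in a model of $\Omc$ appears, in any $\Sigma$-bisimilar copy, as a family of $s'_i$-successors carrying arbitrary combinations of the $A_i$. A single conjunct $\forall r.\varphi$ of $C_0$ thus unfolds on the $\Sigma$-side into a conjunction of $\forall s'_i.\varphi_i$ constraints that must all be jointly satisfied. I would design $C_0$ to force an $r$-chain of depth $n$ whose vertices carry a binary counter encoded via the $A_i$, with the width of the counter doubling per level and consistency between consecutive levels enforced via such $\forall r$-quantification. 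This produces $\Tower(n-2)$ pairwise inequivalent depth-$n$ $\Sigma$-bisimilarity classes realised among models of $C_0$ under $\Omc$.

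The final step is to convert this combinatorial abundance into a genuine size lower bound via an Ehrenfeucht--Fra\"iss\'e/adversary argument: for every $N<\Tower(n-2)$, exhibit pairs of pointed models $(\Imc_0,d_0), (\Imc_1,d_1)$ of $\Omc$ that agree on every $\ALC(\Sigma)$ concept of size $\leq N$, yet lie in different depth-$n$ $\Sigma$-bisimilarity classes, exactly one of which is realised by a $C_0$-model. Any uniform interpolant of size $\leq N$ would collapse the two, contradicting its specification.

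The main obstacle is, as is typical for non-elementary lower bounds, to block the uniform interpolant from amortising its cost by sharing subconcepts across configurations. I would handle this by engineering the counter so that the bits at each level depend causally on the entire state of the previous level through the role-inclusion fan-out: any \ALC($\Sigma$) concept distinguishing two top-level configurations is then forced to encode a full lower-level state, which iteratively pushes the blow-up through the tower. Making this causal dependency airtight, and designing the bit-doubling in a way that survives the Ehrenfeucht--Fra\"iss\'e rounds, is where the bulk of the technical work will lie.
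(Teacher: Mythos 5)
Your proposal takes a genuinely different route from the paper's, but it has gaps that go beyond merely being a sketch. The paper's proof is dramatically simpler: it takes $\Omc = \{r\sqsubseteq s,\ r\sqsubseteq s'\}$, $C_0 = \exists r.\top$, and $\Sigma = \{s,s'\}$ (no concept names at all). The single forced $r$-successor is simultaneously an $s$- and an $s'$-successor, so for every $\ALC(\Sigma)$ concept $C$ of depth $n-1$ the depth-$n$ concept $\forall s.C \to \exists s'.C$ is a consequence of $C_0$ under $\Omc$. Since there are at least $\Tower(n-1)$ pairwise non-$\Sigma$-bisimilar trees of depth $n-1$ over $\Sigma=\{s,s'\}$, a pigeonhole on the sub-concept types of $U$ yields two such trees whose roots $d_1,d_2$ agree on all subconcepts of $U$; attaching fresh points $e_1$ (with $r,s,s'$-edges to $d_1$) and $e_2$ (with an $s$-edge to $d_1$ and an $s'$-edge to $d_2$) produces a model of $\Omc$ where $e_2\in U^{\Jmc}$ but $e_2\notin(\forall s.D\to\exists s'.D)^{\Jmc}$ for the separating $D$. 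No counter, no E--F game, no depth restriction on $U$.

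There are two substantive problems with your plan. First, the mechanism you describe for pushing blow-up through the tower is incorrect as stated: the claim that ``a single conjunct $\forall r.\varphi$ of $C_0$ unfolds on the $\Sigma$-side into a conjunction of $\forall s'_i.\varphi_i$ constraints'' does not follow from $\Sigma$-bisimulation when $r\notin\Sigma$. Bisimulation over $\Sigma$ only relates $s'_i$-successors on the two sides; it gives no reason for an arbitrary $s'_i$-successor of a bisimilar point to be an $r$-successor of $d$, so $\forall r.\varphi$ imposes no constraint on the copy's $s'_i$-successors. The paper's ontology derives leverage from the opposite coupling: the $r$-witness is forced to appear among \emph{both} the $s$- and the $s'$-successors, which is what makes $\forall s.C\to\exists s'.C$ a genuine $\Sigma$-consequence. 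Second, restricting attention to uniform interpolants of role depth $\leq n$ is not without loss of generality for a \emph{size} lower bound: the canonical depth-$\leq n$ uniform interpolant is the logically weakest one, and nothing rules out a priori that a syntactically smaller concept of greater depth also serves as a $(\Sigma,n)$-uniform interpolant. (Indeed, the paper's argument is careful not to assume anything about the depth of $U$.) Finally, the counter encoding, the doubling invariant, and the adversary rounds are all deferred, and as you acknowledge these are exactly the parts where such constructions usually fail; given the two issues above, I would not expect the current plan to close without essentially rediscovering the paper's much lighter observation.
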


\begin{proof}
Consider the \ALCH ontology $\Omc=\{\roleOne\sqsubseteq\roleTwo,\roleOne\sqsubseteq\roleThree\}$,  the concept
%\[
$C_0=\exists r.\top$
%\]
and $\Sigma=\{\roleTwo,\roleThree\}$. We claim that no concept $U$ of size smaller than $\Tower(n-2)$ is a $(\Sigma,n)$-uniform interpolant for $C_0$ under \Omc. 
Assume towards contradiction that there is such an $U$. Observe that for every $C\in\ALC(\Sigma)$ of depth $n-1$, $\forall\roleTwo.C\to\exists\roleThree.C$ is an $\ALC(\Sigma)$ concept of depth $n$ and $\Omc\models C_{0} \sqsubseteq (\forall\roleTwo.C\to\exists\roleThree.C)$. Hence $\Omc\models U \sqsubseteq (\forall\roleTwo.C\to\exists\roleThree.C)$.
Consider all trees of depth $n-1$, choose one for every equivalence class of $\Sigma$-bisimulation and denote the set of all these chosen trees by \Tmc. We have $\Tower(n-1)\leq|\Tmc|$ and $\size{U}<\Tower(n-2)$. Thus, by the pigeonhole principle there are two different $\Imc_1,\Imc_2\in\Tmc$ whose respective roots $d_1,d_2$ satisfy exactly the same sub-concepts of $U$. Every two trees in \Tmc are distinguished by some $D\in\ALC(\Sigma)$ of depth $n-1$, so let us pick $D$ such that $d_1\in D^{\Imc_1}$ but $d_2\notin D^{\Imc_2}$. We claim that $\Omc\not\models U\sqsubseteq \forall\roleTwo.D\to\exists\roleThree.D$, which contradicts that $U$ is an $(\Sigma,n)$-uniform interpolant. This is witnessed by an interpretation \Jmc constructed as follows. First take the disjoint union of $\Imc_1,\Imc_2$. Then take two fresh points $e_1,e_2$, and add edges $e_1\stackrel{\roleOne}{\to}d_1,e_1\stackrel{\roleTwo}{\to}d_1,e_1\stackrel{\roleThree}{\to}d_1$ and $e_2\stackrel{\roleTwo}{\to}d_1,e_2\stackrel{\roleThree}{\to}d_2$. Since $\Omc\models C_{0}\sqsubseteq U$ and $C_0$ is true at $e_1$ we have $e_1\in U^\Jmc$. This implies $e_2\in U^\Jmc$ because $d$ and $d'$ satisfy the same subconcepts of $U$. But $\forall\roleTwo.D\to\exists\roleThree.D$ is false at $e_2$, a contradiction.

% Let $C_0'=\bot$ and $C_{n+1}'=\forall s.C_n'\sqcap\forall s'.C_n'$.
% For all $n\geq 1$ let $C_n'$ be an (exponentially sized) concept asserting that there is no path over $\{\roleTwo,\roleThree\}$ longer than $n$. Let $C_n=C_0\sqcap C_n'$. It follows that a concept $U$ is a $(\Sigma,n)$-uniform interpolant of $C_n$ iff $U$ is a $\Sigma$-uniform interpolant of $C_0$. 
\end{proof}
On the positive side, we show the following second main result.
\begin{theorem}\label{thm:construction ALCH}
Let $\Omc$ be an \ALCH ontology, $C_0,D_0$ \ALC concepts, and $\Sigma$ be a signature. Then, if there is an $\ALC(\Sigma)$ interpolant for $\Omc\models C_0\sqsubseteq D_0$, we can construct the DAG representation of such an interpolant in time double exponential in $\size{\Omc}+\size{C_0}+\size{D_0}$.
\end{theorem}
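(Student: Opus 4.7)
The plan is to follow the strategy outlined in the introduction: run the mosaic elimination procedure from~\cite{DBLP:journals/tocl/ArtaleJMOW23} that decides joint $\sim_{\ALC,\Sigma}$-consistency under $\Omc$, and in parallel construct, for every eliminated mosaic, polyadic separator concepts in $\ALC(\Sigma)$. By Lemma~\ref{lem:criterion}, if an interpolant exists, then $\{C_0,\neg D_0\}$ is not jointly $\sim_{\ALC,\Sigma}$-consistent under $\Omc$, and the mosaic $\{t_0,t_1\}$ formed by any two types of $C_0\sqcap\Omc$ and $\neg D_0\sqcap\Omc$ is eliminated; the separator $\separ(C_0)$ for this mosaic is then the desired interpolant.

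First I would fix the relevant closure: take the set $\Sub$ of all subconcepts of $C_0,D_0,\Omc$, closed under single negation. A \emph{type} is a maximal $\Omc$-consistent subset of $\Sub$; there are at most exponentially many, and a \emph{mosaic} is a set of types. For role inclusions $\roleOne\sqsubseteq\roleTwo$, the key technical point is that the successor requirements of a type $t$ under role $r$ must be realizable together with the $\roleTwo$-requirements for every $\roleTwo$ with $\roleOne\sqsubseteq_\Omc\roleTwo$; this is what the example around $\Omc=\{\roleOne\sqsubseteq\roleTwo_i,\roleTwo_i\sqsubseteq\roleThree_i\}$ shows must be handled polyadically, over the whole mosaic rather than type by type. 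Accordingly, I define a mosaic $M$ to be \emph{good} (after some elimination round) if for every role $\roleOne$ and every family $(\exists\roleOne.C_t)_{t\in M'}$ of existential requirements picked from types $t\in M'\subseteq M$, there is a mosaic $M^\star$ still surviving with types $t^\star_t\ni C_t$ that $\Sigma$-matches $M'$ in the sense required by the Back/Forth/Atom conditions lifted to $\Omc$-successors of $\roleOne$ and of all $\Sigma$-roles above $\roleOne$.

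Second, I would construct the polyadic separators by induction on the elimination stage at which a mosaic $M$ is removed. When $M$ is eliminated at stage $0$, this is because its types already disagree on some $\Sigma$-concept name or on an atomic $\Sigma$-propagation; the base case separators are built directly from such a name. When $M$ is eliminated at stage $k+1$, there is a role $\roleOne$, a subfamily $M'\subseteq M$, and existentials $(\exists\roleOne.C_t)_{t\in M'}$ such that every candidate successor mosaic $M^\star$ realizing this family was eliminated at stage $\leq k$. For each such $M^\star$ the induction hypothesis provides polyadic separators $\separ_{M^\star}(u)$ for $u\in M^\star$. I combine these by quantifying along the $\Sigma$-roles above $\roleOne$: for $t\in M'$, set
\[
\separ_M(t)\;=\;\bigsqcap_{\roleTwo:\,\roleOne\sqsubseteq_\Omc\roleTwo,\,\roleTwo\in\Sigma}\forall\roleTwo.\bigl(\bigsqcup_{u}\separ_{M^\star}(u)\bigr)\sqcap\ldots,
\]
schematically aggregating over the finitely many choices of $(M',\roleOne,\cdot)$ witnessing elimination, and for $t\in M\setminus M'$ set $\separ_M(t)=\top$. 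The Back/Forth conditions in the definition of ``good'' give $\Omc\models t\sqsubseteq\separ_M(t)$, and the contradictory nature of each $M^\star$-choice gives $\Omc\models\bigsqcap_{t\in M}\separ_M(t)\sqsubseteq\bot$.

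Finally, complexity. There are at most $2^{2^{p(n)}}$ mosaics, where $n=\size{\Omc}+\size{C_0}+\size{D_0}$ and $p$ is a polynomial, and the elimination runs in at most that many rounds. At each round, the separators of a newly eliminated mosaic are built from polynomially many Boolean combinations of separators of earlier eliminated mosaics. Representing the $\separ_M(t)$ as a DAG with one shared node per pair $(M,t)$ therefore yields a DAG of double exponential size, computable in double exponential time. The hard part of the argument is the polyadic inductive step: one must verify that combining the successor separators under the correct universal $\Sigma$-role quantifiers does simultaneously subsume each $t\in M'$ and produce an inconsistent conjunction, and that no information from non-$\Sigma$ roles leaks in; this is exactly what the example showed the naive (non-polyadic) construction of~\cite{DBLP:journals/tocl/ArtaleJMOW23} gets wrong, and is the place where the correctness proof requires the most care.
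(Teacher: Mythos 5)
Your high-level plan matches the paper's: run the mosaic elimination procedure and, in parallel, build polyadic $\ALC(\Sigma)$ separators for every eliminated mosaic, then read off the interpolant as $\separ(C_0)$ from the separator of $\{C_0,\neg D_0\}$. The base case (types disagreeing on a $\Sigma$-concept name) and the complexity accounting are also on target. However, the inductive step — which you yourself flag as ``the hard part'' — is where the proposal both diverges from the correct construction and contains genuine errors.

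First, the violated saturation condition is mischaracterized. In the paper's procedure, a mosaic $T$ fails existential saturation because of a \emph{single} witness $t\in T$ with a \emph{single} $\exists r.C\in t$ for which no surviving successor mosaic works; the back-and-forth obligations then come from \emph{all} of $T$ (each $t'\in T$ contributes $t'_{/s}$ for every $s\in\Sigma$ above $r$), plus the witness constraint $\{C\}\cup t_{/r}$. Your version — a subfamily $M'\subseteq M$ with a family of existentials $(\exists r.C_t)_{t\in M'}$ — is not the condition used, and it leads you to set $\separ_M(t)=\top$ for $t\in M\setminus M'$. That step is wrong: it makes $\bigsqcap_{t\in M}\separ_M(t)$ equivalent to $\bigsqcap_{t\in M'}\separ_M(t)$, and nothing in your setup guarantees the latter is inconsistent. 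The actual construction gives every type $t'\neq t$ a genuine separator $\bigsqcap_{s}\forall s.\separ_{\Dmc}(t'_{/s})$, and handles the witness type asymmetrically by setting $\separ(t)=\neg\bigsqcap_{t'\neq t}\separ(t')$, which makes inconsistency hold by construction while $\Omc\models t\sqsubseteq\separ(t)$ is the nontrivial claim to verify. None of this appears in your sketch.

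Second, and more fundamentally, your ``I combine these'' step is exactly where the essential lemma is missing. The set $\Dmc$ of successor constraints is a set of \emph{partial} types, and the eliminated mosaics you can appeal to by induction are its \emph{completions} — of which there are many, and a given model may realize any of them. The paper's Lemma~\ref{lem:partial vs complete} shows how to assemble a separator for $\Dmc$ from the separators of all its completions via a disjunction-over-completions of conjunctions-over-choice-functions, i.e.\ Equation~\eqref{eq:completion}. Your schematic formula quantifies over a single eliminated $M^\star$ and takes a disjunction $\bigsqcup_u\separ_{M^\star}(u)$ of its separator concepts; this is neither sound (a different model may hit a different eliminated successor mosaic) nor obviously related to the correct disjunction/conjunction pattern. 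Without an explicit analogue of Lemma~\ref{lem:partial vs complete}, the claim $\Omc\models t'\sqsubseteq\separ_M(t')$ cannot be established, and the inductive step collapses. This lemma is also what drives the double-exponential size bound, so the complexity claim rests on it too.
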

The proof is by extending a known mosaic elimination procedure for deciding joint
$\sim_{\ALC,\Sigma}$-consistency for input ontology and concepts formulated in $\ALCH$~\cite{DBLP:journals/tocl/ArtaleJMOW23}. 
We present a slight simplification of the original procedure, as we require it only for a restricted setting. 

Let us fix an \ALCH ontology $\Omc$, $\ALC$ concepts $C_0,D_0$, and a signature $\Sigma$. We denote with $\sub(\Omc,C_0,D_0)$ the set of subconcepts that occur in \Omc, $C_0, D_0$, closed under single negation. 
A \emph{type for \Omc} is any subset of $\sub(\Omc,C_0,D_0)$ \emph{realizable} in a model of \Omc, that is, any set $t\subseteq \sub(\Omc,C_0,D_0)$ such that there is a model \Imc of $\Omc$ and element $d\in \Delta^\Imc$ with $t=\tp_\Imc(d)$ where:
\[\tp_\Imc(d)=\{E\in\sub(\Omc,C_0,D_0)\mid d\in E^\Imc\}.\]
We often treat a type $t$ as the conjunction of all concepts it contains, which allows us to write, for instance, $\Omc\models t\sqsubseteq D$. A \emph{mosaic for \Omc} is a set $T$ of types for \Omc.
We say that a type $t$ is a \emph{completion} of a concept $C\in\sub(\Omc,C_0,D_0)$ if $C\in t$, and a mosaic $T$ is a \emph{completion} of a set $\Cmc\subseteq\sub(\Omc,C_0,D_0)$ of concepts if $T$ contains a completion of every $C\in\Cmc$.

Intuitively, a mosaic $T$ describes a collection of elements in an interpretation $\Imc$ which realize precisely the types in $T$ and are mutually $\Sigma$-bisimilar. Naturally, not every set of types can be realized in this way, and we use a mosaic elimination procedure to determine which can. 
We write $t\leadsto_r t'$ if an element of type $t'$ is a viable $r$-successor of an element of type $t$, that is, $\{C\mid \forall r.C\in t\}\subseteq t'$. We will denote $\{C\mid \forall r.C\in t\} = t_{/_r}$. 
We write $T\leadsto_r T'$ if for every $t\in T$, there is $t'\in T'$ with $t\leadsto_r t'$. Let $\Mmc$ be a set of mosaics. A mosaic $T\in \Mmc$ is \emph{bad} if it violates one of the following conditions:
\begin{description}
  \item[(Atomic Consistency)]\label{it:atomic step ALCH} for every $t,t'\in T$ and $A\in\Sigma$, $A\in t$ iff $A\in t'$;
  \item[(Existential Saturation)]\label{it:exists step ALCH} for every $t\in T$ and $\exists r.C\in t$, there is $T'\in \Mmc$ such that (a) $C\in t'$ for some $t'\in T'$ with $t\leadsto_r t'$ and (b) if $\Omc\models r\sqsubseteq s$ for some $s\in \Sigma$, then $T\leadsto_s T'$.

\end{description}
Along the lines of the proof of Lemma~6.5 in~\cite{DBLP:journals/tocl/ArtaleJMOW23} one can show Lemma~\ref{lem:alc-joint-consistency} below, see the appendix for a proof sketch. The original Lemma~6.5 works with pairs of mosaics which is necessary for DLs that are not preserved under disjoint unions such as \ALCO.
\begin{restatable}{lemma}{lemalcjointconsistency}\label{lem:alc-joint-consistency}
  $C_0$ and $D_0$ are jointly $\sim_{\ALC,\Sigma}$-consistent under \Omc iff there is a set $\Mmc^*$ of mosaics that does not contain a bad mosaic and such that there is $T\in\Mmc^*$ and $t_1,t_2\in T$ with $C_0\in t_1$ and $D_0\in t_2$.
\end{restatable}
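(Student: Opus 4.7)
The plan is to prove both directions of the equivalence, following the standard mosaic-elimination blueprint adapted to a single mosaic per bisimulation class (rather than pairs of mosaics, since $\ALCH$ is preserved under disjoint unions, so there is no need to keep the two witnessing models formally separate).

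For the direction $(\Rightarrow)$, suppose there are models $\Imc_1,\Imc_2$ of $\Omc$ and elements $d_i\in C_i^{\Imc_i}$ with $\Imc_1,d_1\sim_{\ALC,\Sigma}\Imc_2,d_2$, where $C_1=C_0,C_2=D_0$. Take the disjoint union $\Imc=\Imc_1\uplus\Imc_2$, which is still a model of \Omc, and let $Z$ be the maximal $\Sigma$-bisimulation on $\Imc$. For every $Z$-equivalence class $\mathcal{E}$ define the mosaic $T_\mathcal{E}=\{\tp_\Imc(d)\mid d\in\mathcal{E}\}$, and set $\Mmc^*=\{T_\mathcal{E}\mid \mathcal{E}\text{ a $Z$-class}\}$. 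I would then check that no $T_\mathcal{E}$ is bad: Atomic Consistency follows because $Z$-equivalent elements agree on concept names in $\Sigma$, and Existential Saturation is witnessed by the $Z$-class of any $r$-successor realising the required existential, using the Back/Forth clauses of $Z$ together with the entailed role inclusions in \Omc to verify condition (b). The mosaic $T_0$ of the $Z$-class of $d_1$ (which, by bisimilarity, also contains $d_2$) contains both $\tp_\Imc(d_1)\ni C_0$ and $\tp_\Imc(d_2)\ni D_0$.

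For the direction $(\Leftarrow)$, I would construct a single interpretation $\Imc$ by a tree-shaped unravelling of $\Mmc^*$. Elements are sequences encoding a sequence of mosaic choices, with an element created for \emph{every} type in the current mosaic at each step, so that "batch-siblings" sit at the same address. The root batch corresponds to $T_0$, with distinguished elements $d_1,d_2$ realising $t_1,t_2$. For each element with current type $t$ and each $\exists r.C\in t$, apply Existential Saturation to pick witnesses $T'\in\Mmc^*$ and $t'\in T'$; install a fresh child batch for $T'$, add the $r$-edge from the current element to the element realising $t'$, add every $s$-edge forced by $\Omc\models r\sqsubseteq s$ (so RIs are respected), and, using clause (b) together with $T\leadsto_s T'$, install parallel $s$-edges from each batch-sibling to matching targets in the child batch for every $s\in\Sigma$ with $\Omc\models r\sqsubseteq s$. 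Set $A^\Imc$ to contain an element exactly when $A$ lies in its type.

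The core verification has three parts. First, $\Imc\models\Omc$: the CIs hold because each used type is realisable in some model of $\Omc$, hence closed under entailed subconcepts; the RIs hold by construction. Second, by induction on concept structure, every element $d$ with type $t$ satisfies exactly the concepts in $t$: for existentials, the witness is present by construction; for universals, the propagation uses the condition $t_{/_r}\subseteq t'$ built into every $\leadsto_r$ step, so every $r$-successor type contains the required consequents. Third, the "same batch" relation is a $\Sigma$-bisimulation, using Atomic Consistency for the Atom clause and the symmetric parallel-edge construction for Back/Forth. Then $d_1,d_2$ are $\Sigma$-bisimilar and realise $C_0,D_0$ respectively, so joint $\sim_{\ALC,\Sigma}$-consistency holds (setting $\Imc_1=\Imc_2=\Imc$).

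The main subtlety I expect is in the second verification step, where one must ensure that the bisim-forcing sibling $s$-edges added "for free" during the construction do not accidentally falsify a universal carried by a sibling's type; this is precisely what condition (b) prevents, since $T\leadsto_s T'$ guarantees that the target type for each sibling contains $t^*_{/_s}$. The other mild point is bookkeeping: existential witnesses must be chosen consistently across repeated occurrences of the same $(T,t,\exists r.C)$ so that the tree unravelling is well-defined, but any fixed choice function works.
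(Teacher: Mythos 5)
Your forward direction is essentially identical to the paper's: pass to a single model via disjoint union (legitimate because $\ALCH$ is preserved under disjoint unions), then read off one mosaic per $\Sigma$-bisimulation class of elements. The backward direction is where you genuinely diverge. The paper does not unravel: it builds a single ``canonical'' interpretation whose domain is the set of all pairs $(t,T)$ with $t\in T\in\Mmc^*$, declares $A^\Imc$ by membership of $A$ in $t$, and defines $r^\Imc$ globally by a declarative condition, namely $t\leadsto_r t'$ together with $T\leadsto_s T'$ for every $\Sigma$-role $s$ with $\Omc\models r\sqsubseteq s$. Because the edge relation is defined by a condition rather than installed step by step, closure under role inclusions, satisfaction of the truth lemma, and the fact that $\{((t,T),(t',T))\mid T\in\Mmc^*\}$ is a $\Sigma$-bisimulation all become routine verifications with no unravelling bookkeeping; the back-and-forth clause for the bisimulation follows directly from $T\leadsto_r T'$. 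Your tree unravelling with ``batches'' is sound in spirit, but buys you nothing here and forces you to manually track which edges are present, which is where a small gap appears: you install parallel sibling $s$-edges only for $s\in\Sigma$ with $\Omc\models r\sqsubseteq s$, yet once such an $s$-edge exists between a sibling $d''$ and a child $e''$, every $s''$ with $\Omc\models s\sqsubseteq s''$ (including $s''\notin\Sigma$) must also connect $d''$ to $e''$ for the RIs to hold, and these edges are not installed by your description. The fix is to RI-close every installed edge (and to observe that, assuming the usual closure of $\sub(\Omc,C_0,D_0)$ under role inclusions, the target types still respect the universals of $t''$), but as written your claim ``the RIs hold by construction'' is not justified. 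The paper's flat construction sidesteps this entirely, which is the main reason to prefer it.
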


It is a consequence of Lemma~\ref{lem:alc-joint-consistency} that joint $\sim_{\ALC,\Sigma}$-consistency under \ALCH ontologies can be decided in double exponential time. Indeed, an $\Mmc^*$ as in Lemma~\ref{lem:alc-joint-consistency} can be found (if it exists) by exhaustively eliminating bad mosaics from the set of all mosaics. Since the set of all mosaics is of double exponential size, and each round of the elimination procedure can be performed in time polynomial in the size of the current set of mosaics, the upper bound follows. By the link to interpolant existence provided in Lemma~\ref{lem:criterion}, also $\ALC(\Sigma)$ interpolant existence is decidable in double exponential time.

Our aim is to extend the described mosaic elimination procedure by computing, for each type in an eliminated mosaic, its ``contribution'' to the elimination.
% As mentioned before, the proof of all this is non-constructive in the sense that, in case interpolant/separator exists, it is not explicitly constructed. {\color{blue} An attempt has been made in...} 
% 
% Our aim is to fill this gap. We do that by extending the mosaic elimination algorithm. At each step of the elimination procedure, whenever a mosaic is deleted we compute a separator in a suitable sense.
% 
% Mosaics are sets of types. 
%Motivated by the inherently polyadic dynamics of the mosaic elimination procedure, 
To formalize this we introduce a polyadic notion of a separator reflecting the fact that a mosaic may contain more than two types. Assume a set $\Cmc$ of concepts. An \emph{$\mathcal{ALC}(\Sigma)$ separator for $\Cmc$} is a function $\separ$ from \Cmc to $\mathcal{ALC}(\Sigma)$-concepts such that:
\begin{itemize}
    \item $\Omc\models C \sqsubseteq \separ(C)$ for every $C\in\Cmc$;
    \item $\Omc\models \bigsqcap_{C\in\Cmc}\separ(C) \sqsubseteq \bot$.
\end{itemize}
%
% Analogously, we generalize the notion of joint consistency from pairs to sets of concepts: we call $\Cmc$ \emph{joint $\sim_{\ALC,\Sigma}$-consistenct} if all its elements are satisfied in a set of respective models, all pairwise $\Sigma$-bisimilar.
% A standard (non-constructive) argument shows that $\Cmc$ is separable iff it is not joint $\sim_{\ALC,\Sigma}$-consistent.
% In what follows we present a constructive argument. 
% We extend the described mosaic elimination procedure by constructing inductively an $\ALC(\Sigma)$ separator for every eliminated mosaic. 
We call $\Cmc$ \emph{$\ALC(\Sigma)$-separable} if there is an $\ALC(\Sigma)$ separator for \Cmc. We will use the following lemma which connects separation of concepts with separation of their completions.

\begin{lemma}\label{lem:partial vs complete}
  Assume a set $\Cmc\subseteq\sub(\Omc,C_0,D_0)$ of concepts. The following are equivalent:
  \begin{enumerate}
    \item\label{it:sep formulae} the set $\Cmc$ is $\ALC(\Sigma)$-separable;
    \item\label{it:sep completions} every  completion $T$ of $\Cmc$ is $\ALC(\Sigma)$-separable.
\end{enumerate}
% Moreover, if all the separators in~\eqref{it:sep completions} consist of concepts of size bounded by $m$ then we construct separators for~\eqref{it:sep formulae} of size at most $l^{|\Cmc|}\times m\leq l^l\times m$ where $l$ is the number of types.
\end{lemma}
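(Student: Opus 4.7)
The direction (1)$\Rightarrow$(2) admits a direct syntactic construction. Given a separator $\separ$ for $\Cmc$ and any completion $T$ of $\Cmc$, I would set
\[
\separ_T(t) \;:=\; \bigsqcap_{C \in \Cmc,\, C \in t} \separ(C),
\]
with the empty conjunction interpreted as $\top$. The entailment $\Omc \models t \sqsubseteq \separ_T(t)$ holds because $t$, viewed as the conjunction of its members, entails each $C \in t$, and $\Omc \models C \sqsubseteq \separ(C)$. For $\Omc \models \bigsqcap_{t \in T}\separ_T(t) \sqsubseteq \bot$, the completion property supplies, for each $C \in \Cmc$, some $t_C \in T$ with $C \in t_C$; then $\separ_T(t_C) \sqsubseteq \separ(C)$, so $\bigsqcap_{t \in T}\separ_T(t) \sqsubseteq \bigsqcap_{C \in \Cmc} \separ(C) \sqsubseteq \bot$.

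For (2)$\Rightarrow$(1), the plan is to combine the separators of all completions. For each $C \in \Cmc$, let $\tau(C)$ be the set of realizable types $t$ with $C \in t$; one may assume $\tau(C) \neq \emptyset$ for every $C$, else such a $C$ is unsatisfiable under $\Omc$ and the claim is immediate. For every choice function $f$ assigning some $f(C) \in \tau(C)$ to each $C \in \Cmc$, the set $T_f := \{f(C) \mid C \in \Cmc\}$ is a completion of $\Cmc$, and by assumption comes with a separator $\separ_f$ on $T_f$. The proposed separator is
\[
\chi_t \;:=\; \bigsqcap_{f \,:\, t \in T_f} \separ_f(t), \qquad \separ(C) \;:=\; \bigsqcup_{t \in \tau(C)} \chi_t.
\]

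For $\Omc \models C \sqsubseteq \separ(C)$, take $d \in C^{\Imc}$ with $\Imc \models \Omc$; its type $t_d$ lies in $\tau(C)$, and for every $f$ with $t_d \in T_f$ the condition $\Omc \models t_d \sqsubseteq \separ_f(t_d)$ yields $d \in \separ_f(t_d)^{\Imc}$, so $d \in \chi_{t_d}^{\Imc} \subseteq \separ(C)^{\Imc}$. For joint unsatisfiability, distributivity gives
\[
\bigsqcap_{C \in \Cmc}\separ(C) \;\equiv\; \bigsqcup_g \bigsqcap_{C \in \Cmc}\chi_{g(C)},
\]
with $g$ ranging over choice functions; for each fixed $g$, the conjunct for $f = g$ inside each $\chi_{g(C)}$ supplies $\separ_g(g(C))$, so $\bigsqcap_C \chi_{g(C)} \sqsubseteq \bigsqcap_C \separ_g(g(C)) \equiv \bigsqcap_{t \in T_g}\separ_g(t) \sqsubseteq \bot$ by the separator property of $\separ_g$.

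The main obstacle is the need to coordinate the separators across completions in (2)$\Rightarrow$(1): each $\separ_f$ witnesses joint inconsistency only for its own mosaic $T_f$, whereas a realizer of $C$ may have any type from $\tau(C)$. The outer disjunction $\bigsqcup_{t \in \tau(C)}$ adapts to the actual type of a realizer, while the inner conjunction $\bigsqcap_{f \,:\, t \in T_f}$ ensures that, when the distributive expansion fixes a choice function $g$, the separator $\separ_g$ is invoked at arguments that witness inconsistency of the corresponding completion $T_g$, thereby transferring joint unsatisfiability from $T_g$ to $\Cmc$.
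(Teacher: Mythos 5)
Your proof is correct and takes essentially the same approach as the paper: the same disjunction-over-completions / conjunction-over-choice-functions construction in the (2)$\Rightarrow$(1) direction, with joint inconsistency derived by distributing and then invoking the separator of the mosaic $T_g$ determined by each choice function $g$, noting that non-injectivity of $g$ is harmless since one compares against the set $T_g$. The only cosmetic difference is that your inner conjunction for $\chi_t$ ranges over all $f$ with $t\in T_f$ rather than only over $f$ with $f(C)=t$ as the paper does, which is a sound (if slightly more redundant) variant, and your (1)$\Rightarrow$(2) direction spells out explicitly the separator $\separ_T(t)=\bigsqcap_{C\in\Cmc,\,C\in t}\separ(C)$ that the paper leaves implicit.
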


\begin{proof}
The implication \eqref{it:sep formulae}$\implies$\eqref{it:sep completions} is straightforward because a separator for $\Cmc$ is a separator for every completion $T$ of $\Cmc$.
% Indeed, assume $T$ is a completion of $\Cmc$ and $\separ$ separates $\Cmc$. Since $T$ is a completion, every $C\in\Cmc$ is contained in some $t_C\in T$. Since $t_C\models C\models \separ(C)$ for every $C$, $\separ$ is a separator for $T$.

For the other implication \eqref{it:sep completions}$\implies$\eqref{it:sep formulae} assume that for every completion of $\Cmc$ we have a separator. Let $\completion(\Cmc)$ denote the set of all functions from $\Cmc$ to types which map every $C\in\Cmc$ to one of its completions. The image $f[\Cmc]$ of every such function $f\in\completion(\Cmc)$ is a completion of \Cmc, and thus, by assumption, is separated by some $\separ_f$.

We define a separator $\separ$ for $\Cmc$ by setting, for every $C\in \Cmc$: 
\begin{equation}
\separ(C) = \bigsqcup_{\substack{t \text{ completion}\\ \text{of $C$}}}\bigsqcap_{\substack{f\in\completion(\Cmc)\\f(C)=t}}\separ_f(C)\tag{$\ast$}\label{eq:completion}
\end{equation}

To prove $\Omc\models C\sqsubseteq \separ(C)$ assume $\Imc\models \Omc$ and $d\in C^\Imc$. Let $t$ be the type $\tp_\Imc(d)$ of $d$. Clearly, $t$ is a completion of $C$. By assumption, for every $f\in\completion(\Cmc)$ and $C\in\Cmc$ we have $\Omc\models f(C)\sqsubseteq \separ_f(C)$. Hence, for every $f\in\completion(\Cmc)$ such that $f(C)=t$ we get $\Omc\models t\sqsubseteq \separ_f(C)$. It follows that $d\in(\separ(C))^\Imc$.

It remains to show that $\Omc\models\bigsqcap_{C\in\Cmc}\separ(C)\sqsubseteq\bot$. Assume towards contradiction an interpretation $\Imc\models\Omc$ with $d\in(\bigsqcap_{C\in\Cmc}\separ(C))^\Imc$. By definition of $\separ$, for every $C$ there is a completion $t_C$ such that $d$ satisfies:
\[
\bigsqcap_{\substack{f\in\completion(\Cmc)\\f(C)=t_C}}\separ_f(C).
\]
Consider a function $f\in\completion(\Cmc)$ defined as $f(C)=t_C$ for every $C$. It follows that $d$ satisfies $\separ_f(C)$ for every $C$. This contradicts the assumption that $\{\separ_f(C)\ |\ C\in\Cmc\}$ is a separator for the image of $f$ and as such is inconsistent.
\end{proof}

We inductively define separators for each eliminated mosaic. Recall that there are two ways a mosaic $T$ can be eliminated: the base case when $T$ violates atomic consistency%~\eqref{it:atomic step ALCH}
, and the inductive case when $T$ violates existential saturation%~\eqref{it:exists step ALCH}
. We look at these cases in turn.

\smallskip\textbf{Inductive Base.} If $T$ violates atomic consistency then there is a concept name $A\in \Sigma$ and types $t,t'\in T$ with $A\in t$ and $\neg A\in t'$. Let $\separ(t)$ be $A$ if $A\in t$ and $\neg A$ otherwise.
% Consider:
% \[
% \separ(t) =
% \begin{cases}
%   A & \text{if $A\in t$}\\
%   \neg A & \text{otherwise.}
% \end{cases}
% \]
It follows that $\Omc\models t\sqsubseteq \separ(t)$ for all $t$, and $\bigsqcap_{t\in T}\separ(t)\sqsubseteq \bot$. 

\smallskip\textbf{Inductive Step.}
Denote the current set of mosaics by $\Mmc$ and assume a mosaic $T\in\Mmc$ is eliminated because it violates existential saturation. This means that there are $t\in T$ and $\exists \roleOne.C\in t$ such that whenever $T'\in\Mmc$ satisfies (i)
$T\leadsto_\roleTwo T'$ for all $\Omc\models \roleOne\sqsubseteq\roleTwo$, and (ii) contains some $t'\in T'$ with $t\leadsto_\roleOne t'$ and $C\in t'$ then $T'\notin\Mmc$.
Consider the set:
\[
\Dmc=\{t'_{/_\roleTwo}\ |\ t'\in T,s\in \Sigma, \text{ and } \Omc\models \roleOne\sqsubseteq\roleTwo\} \cup \{\{C\}\cup t_{/_\roleOne}\}.
\]
It follows that every completion $T'$ of \Dmc was already eliminated from $\Mmc$: left and right part of the union correspond to parts (i) and (ii) of the violated condition. 
Lemma~\ref{lem:partial vs complete} provides us with a separator $\separ_\Dmc$ for $\Dmc$. We use $\separ_\Dmc$ to get a separator $\separ$ for $T$ as follows.
We put:
\begin{align*}
  \separ(t') & =\hspace{-0.15cm} \bigsqcap_{\substack{\Omc\models\roleOne\sqsubseteq\roleTwo,\\
    \roleTwo\in\Sigma}}\forall{\roleTwo}. \separ_\Dmc(t'_{/_\roleTwo})
    \hspace{0.5cm}\text{and}\hspace{0.5cm}
    \separ(t) =  \neg \bigsqcap_{t'\neq t}\separ(t')=\bigsqcup_{t'\neq t}\bigsqcup_{\substack{\\\Omc\models\roleOne\sqsubseteq\roleTwo,\\
    \roleTwo\in\Sigma}}\exists{\roleTwo}. \neg \separ_\Dmc(t'_{/_\roleTwo}). 
\end{align*}
for every $t'\neq t$.
% For $t'\neq t$ we put:
% \begin{align*}
%   \separ(t') & = \bigsqcap_{\substack{\Omc\models\roleOne\sqsubseteq\roleTwo,\\
%     \roleTwo\in\Sigma}}\forall{\roleTwo}. \separ_\Dmc(t'_{/_\roleTwo})
% \end{align*}
% and for $t$:
% \begin{align*}
%   \separ(t) &=  \neg \bigsqcap_{t'\neq t}\separ(t')=\bigsqcup_{t'\neq t}\bigsqcup_{\substack{\\\Omc\models\roleOne\sqsubseteq\roleTwo,\\
%     \roleTwo\in\Sigma}}\exists{\roleTwo}. \neg \separ_\Dmc(t'_{/_\roleTwo}). 
% \end{align*}

We claim that $\separ$ separates $T$. For every $t'\neq t$ we have $\Omc\models t'\sqsubseteq \separ(t')$. This follows because for every $\Omc\models\roleOne\sqsubseteq\roleTwo$ with $\roleTwo\in\Sigma$ we have $\models t'\sqsubseteq\forall{\roleTwo}.t'_{/_\roleTwo}$ and $\Omc\models t'_{/_\roleTwo}\sqsubseteq \separ(t'_{/_\roleTwo})$.
To show $\Omc\models t\sqsubseteq \separ(t)$ assume an interpretation $\Imc\models\Omc$ with $d\in t^\Imc$. The point $d$ has an $\roleOne$-child $e$ satisfying $\{C\}\cup t_{/_\roleOne}$
and hence also $\separ_\Dmc(\{C\}\cup t_{/_\roleOne})$. By definition of a separator, the image $\separ_\Dmc[\Dmc]$ of $\separ_\Dmc$ is inconsistent. Thus, the fact that $d$ satisfies $\separ_\Dmc(\{C\}\cup t_{/_\roleOne})$ implies that some concept $E\in\separ_\Dmc[\Dmc]$ %$E_{t'_{/_\roleTwo}}\in\Emc_\Dmc$
other than $\separ_\Dmc(\{C\}\cup t_{/_\roleOne})$ must be false at $d$.
We therefore have $d\in (\neg \separ_\Dmc(t'_{/_\roleTwo}))^\Imc$ for some $\Omc\models\roleOne\sqsubseteq\roleTwo$ with $\roleTwo\in\Sigma$ and some $t'$. Since for every $\roleTwo\in\Sigma$ with $\Omc\models\roleOne\sqsubseteq\roleTwo$ we have $d\in(t_{/_s})^\Imc$ and thus $d\in(\separ_\Dmc(t_{/_s}))^\Imc$, it follows that $t'\neq t$. This proves that $d\in(\separ(t))^\Imc$.
Note that $\separ[T]$ is inconsistent by definition: the concept $\separ(t)$ is just a negated conjunction $\bigsqcap_{t'\neq t}\separ(t')$ of the rest. This completes the proof that $\separ$ separates~$T$. 
%, and thus completes the second part of the construction.

This finishes the construction of separators for every eliminated mosaic. To construct the actual interpolant, note that Lemmas~\ref{lem:criterion} and~\ref{lem:alc-joint-consistency} imply that, if there is an $\ALC(\Sigma)$ interpolant for $\Omc\models C_0\sqsubseteq D_0$, then all completions of $\{C_0,\neg D_0\}$ have been eliminated. Lemma~\ref{lem:partial vs complete} provides us with an $\ALC(\Sigma)$ separator $\separ$ for $\{C_0,\neg D_0\}$ and it is easy to see that $\separ(C_0)$ is the sought $\ALC(\Sigma)$ interpolant.

It remains to analyze the DAG size of the constructed separators, which we do by counting the number of sub-formulae used in the constructed separators. On a high-level, we construct one formula for every type in every eliminated mosaic. This formula is of negligible size $1$ in the inductive base, so let us analyze the inductive step. This step relies on Lemma~\ref{lem:partial vs complete}, and one can see that the construction in Equation~\eqref{eq:completion} uses double exponentially many sub-formulae. It remains to note that the Lemma is invoked only double exponentially often and that the construction of the separator formulae for the just eliminated concept introduces only double exponentially many sub-formulae.
This completes the proof of Theorem~\ref{thm:construction ALCH}.

We finish the section with some remarks regarding the size of the constructed interpolants. First, we strongly conjecture that there are examples in which the interpolant is forced to have double exponential role depth, so the upper size bound in Theorem~\ref{thm:construction ALCH} is optimal. Second, it is known that the size of DAG representation of interpolants in standard DLs enjoying the CIP is at most exponential~\cite[Theorem~3.26]{TenEtAl13} and thus there is an exponential gap. 
% This is of course not really surprising given the fact that also interpolant existence is by one exponent harder. 

% Let $l$ be the number of types. If the size of concepts in $\Emc_\Dmc$ is bounded by $m$ then the concepts in the constructed $\Emc_T$ are of size at most $l^{2}\times m$. Both parts combined produce separators with concepts of size at most $l^2\times(l^l\times m)\leq l^{2l}\times m$ from separators with concepts of size at most $m$, at each step of the mosaic elimination procedure. Hence, at the $n$-th step the separating concepts have size at most $(l^l)^n$. It is easy to show\nb{CHECK THIS! otherwise triple-exp separators, not double-exp} that if $C_0,D_0$ are $\ALC(\Sigma)$-separated then this is detected already after $|\sub(\Omc,C_0,D_0)|$-many iterations of the elimination procedure. Consequently, the separators we construct consist of conncepts at most double-exponential in $|\sub(\Omc,C_0,D_0)|$.

\section{Qualified Number Restrictions}
We are concerned with computing \ALC interpolants of concept inclusions in \ALCQ under \ALCQ ontologies. We use the same notation for \ALCQ as in the previous section for \ALCH, defined in the obvious way. Our first result is that $(\Sigma,n)$-uniform \ALC interpolants can be of non-elementary size. 
\begin{theorem}\label{thm:uniform2}
  There is an \ALCQ concept $C_0$ and signature $\Sigma$ such that there is no $(\Sigma,n)$-uniform \ALC interpolant of $C_0$ smaller than $\Tower(n-2)$.
\end{theorem}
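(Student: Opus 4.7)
My approach mirrors the proof of Theorem~\ref{thm:uniform}, with \ALCQ counting replacing the role inclusions. I would take $\Sigma = \{s\}$ (a single role, no concept names) and pick $C_0 = \neg(\geq 2\ s.\top)$, i.e.\ ``the node has at most one $s$-successor''. The first step is to establish the key entailment $\models C_0 \sqsubseteq (\exists s.D \to \forall s.D)$ for every $D \in \ALC(\Sigma)$: if the node has no $s$-successor the implication is vacuous, and otherwise the unique successor validates both sides simultaneously. Hence any $(\Sigma,n)$-uniform \ALC interpolant $U$ of $C_0$ must satisfy $\models U \sqsubseteq (\exists s.D \to \forall s.D)$ for every $D$ of role depth at most $n-1$, since such implications are $\ALC(\Sigma)$ consequences of $C_0$ of role depth at most $n$.

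Next I would carry out the pigeonhole step as in Theorem~\ref{thm:uniform}. Over $\Sigma = \{s\}$ without concept names, the number of $\ALC(\Sigma)$-bisimulation classes of pointed trees of depth $k$ satisfies $d_0 = 1$ and $d_{k+1} = 2^{d_k}$, so $d_{n-1} = \Tower(n-1)$. Fix a set $\Tmc$ of representative pointed trees of depth $n-1$, one per bisimulation class. Assuming towards contradiction that $\size{U} < \Tower(n-2)$, the number of sub-$U$-profiles is at most $2^{\size{U}} < \Tower(n-1) \leq |\Tmc|$, so there are distinct $\Imc_1, d_1$ and $\Imc_2, d_2$ in $\Tmc$ whose roots satisfy exactly the same subconcepts of $U$. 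Pick $D \in \ALC(\Sigma)$ of role depth $n-1$ with $d_1 \in D^{\Imc_1}$ and $d_2 \notin D^{\Imc_2}$.

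Finally I would build a witness $\Jmc$ as the disjoint union $\Imc_1 \uplus \Imc_2$ extended with two fresh nodes $e_1, e_2$ and edges $e_1 \stackrel{s}{\to} d_1$, $e_2 \stackrel{s}{\to} d_1$, and $e_2 \stackrel{s}{\to} d_2$. Then $e_1$ has exactly one $s$-successor, so $e_1 \in C_0^\Jmc \subseteq U^\Jmc$. A routine induction on subconcepts $F$ of $U$ then shows $e_1 \in F^\Jmc$ iff $e_2 \in F^\Jmc$: the only nontrivial cases $F = \exists s.G$ and $F = \forall s.G$ both reduce to asking whether $d_1 \in G^\Jmc$, which by the profile assumption is equivalent to $d_2 \in G^\Jmc$. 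Hence $e_2 \in U^\Jmc$, and so $e_2$ must satisfy $\exists s.D \to \forall s.D$; but $d_1 \in D^\Jmc$ and $d_2 \notin D^\Jmc$ make $e_2$ witness $\exists s.D$ and refute $\forall s.D$ simultaneously, a contradiction.

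The main conceptual obstacle is identifying the right $C_0$: the observation that $\neg(\geq 2\ s.\top)$ forces $\exists s.D \leftrightarrow \forall s.D$ for every $D$ plays the role of the role-inclusion trick of Theorem~\ref{thm:uniform}, and it is the feature that genuinely requires \ALCQ. The remaining ingredients---the iterated-exponential count of bisimulation classes attained already over the minimal signature $\Sigma = \{s\}$, and the inductive profile-agreement at $e_1, e_2$---transfer from the previous proof without essentially new ideas.
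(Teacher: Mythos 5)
Your proposal is correct and follows essentially the same route as the paper's (very terse) proof sketch: both take $C_0 = (\leq 1\ r.\top)$ and exploit the entailment $\models C_0 \sqsubseteq (\exists r.D \to \forall r.D)$ together with the pigeonhole-plus-gluing argument from Theorem~\ref{thm:uniform}. The only cosmetic difference is that you work over the minimal signature of a single role name and no concept names, whereas the paper keeps $\Sigma = \{r,s,s'\}$ as inherited from Theorem~\ref{thm:uniform}; both choices yield the required $\Tower(n-1)$-many bisimulation classes of depth-$(n-1)$ trees.
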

\begin{proof}
    Take the concept $C_{0}=(\leq 1\ r.\top)$, signature $\Sigma= \{r,s,s'\}$, and let $n>0$. Using $\ALC(\Sigma)$ concepts $\exists r.C \rightarrow \forall r.C$ one can show the lower bound in the same way as in the proof of Theorem~\ref{thm:uniform}.  
\end{proof}
The main result of this section is as follows.
\begin{theorem}\label{thm:constructionALCQ}
Let $\Omc$ be an \ALCQ ontology, $C_0,D_0$ \ALC concepts, and $\Sigma$ be a signature. If there is an $\ALC(\Sigma)$ interpolant for $\Omc\models C_0\sqsubseteq D_0$, then
there is one of 3-exponential size which can be constructed in 4-exponential time 
in $\size{\Omc}+\size{C_0}+\size{D_0}$.
\end{theorem}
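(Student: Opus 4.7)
The plan is to follow the same two-phase strategy as in Section~\ref{sec:ALCH}: first invoke a mosaic elimination procedure tailored to \ALCQ, and second compute polyadic $\ALC(\Sigma)$ separators by induction on mosaic elimination, then extract the interpolant from the separator for $\{C_0, \neg D_0\}$. The machinery of types, completions, the polyadic separator notion, and Lemma~\ref{lem:partial vs complete} transfers verbatim; what needs genuine rethinking is the saturation rule in the mosaic procedure and the corresponding inductive step of the separator construction.

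First, I would formulate an \ALCQ analogue of Lemma~\ref{lem:alc-joint-consistency}, building on the mosaic procedures for \ALCQ in~\cite{DBLP:journals/tocl/ArtaleJMOW23,kuijer2025separationdefinabilityfragmentstwovariable}. Types remain realizable subsets of $\sub(\Omc,C_0,D_0)$ closed under single negation; what changes is the saturation condition. Rather than demanding a single witness mosaic per existential, each type $t\in T$ and each $(\geq n\ r.C)\in t$ must be justified by a family of successor mosaics equipped with multiplicities that together deliver $n$ distinct $r$-successors of the required form. Because the bisimulation is only for $\ALC$, two successors falling in the same mosaic are $\Sigma$-indistinguishable, so the rule actually selects an outgoing \emph{profile}: a set of mosaics together with multiplicities, required to realize every number restriction in every $t\in T$ while preserving the Back/Forth conditions for $\Sigma$-role successors shared by the whole mosaic.

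Second, I would extend the inductive separator construction. The atomic case is identical to the \ALCH proof. For the inductive step, if mosaic $T$ is eliminated because no admissible successor profile survives, then every profile has been ruled out, and for each such profile Lemma~\ref{lem:partial vs complete} supplies a separator of its constituent successor-types. I would aggregate these in the style of Equation~\eqref{eq:completion}: for each $t'\neq t$ set $\separ(t')$ to a conjunction of $\forall r.\ldots$ formulas over $r\in\Sigma$ ranging over the component-wise separators for all candidate profiles, and set $\separ(t)$ as the negated conjunction of the $\separ(t')$ for $t'\neq t$, so that $\separ[T]$ is inconsistent by construction. Correctness hinges on the fact that, since the interpolant language is $\ALC(\Sigma)$, the number restrictions of $\Omc$ and of the input concepts are visible to the semantics but need not be expressed in the separator: they are absorbed into the ontology-entailment clause of the polyadic separator definition.

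The main obstacle is the inductive step for violated number saturation. Unlike in \ALCH, where a missing existential witness points to a single successor mosaic, here an entire multi-mosaic profile is missing, and the $\ALC(\Sigma)$ separator has to simultaneously block all alternative profiles compatible with the $\Sigma$-visible behaviour of the parent. Careful bookkeeping is needed to ensure that the quantifier alternations built on top of Lemma~\ref{lem:partial vs complete} compose correctly and do not duplicate the error highlighted in the example at the start of Section~\ref{sec:ALCH}. For the size and time bounds, I would track the blow-up as in the analysis after Theorem~\ref{thm:construction ALCH}: the set of types is single-exponential, the set of mosaics double-exponential, the set of admissible successor profiles triple-exponential, and each invocation of Lemma~\ref{lem:partial vs complete} contributes a further double-exponential factor inside the formula. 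Multiplying these yields the 3-exponential DAG size of the interpolant and the 4-exponential construction time claimed.
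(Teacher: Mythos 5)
Your high-level plan (mosaic elimination plus inductively built polyadic separators, with Lemma~\ref{lem:partial vs complete} used at the end to extract the interpolant for $\{C_0,\neg D_0\}$) matches the paper, and you correctly put your finger on where the real work lies: the inductive step when a mosaic is eliminated for lack of a surviving successor ``profile''. But the sketch you give for that step does not solve the problem; it essentially restates it.

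There are two concrete gaps. First, ``every profile has been ruled out'' does not mean that every candidate mosaic partition consists entirely of eliminated mosaics; it only means that each such partition contains \emph{at least one} eliminated mosaic, while the others may still be alive. Consequently, there is no single set of successor concepts whose every completion is eliminated, and Lemma~\ref{lem:partial vs complete} cannot be invoked in the way you propose (``a separator of its constituent successor-types''). Second, the $\forall r$-style aggregation in the spirit of Equation~\eqref{eq:completion} is in tension with the combinatorics of number restrictions: you must simultaneously express that successors can be grouped into $\Sigma$-bisimilarity classes in a way consistent with the back-and-forth conditions \emph{and} with the required multiplicities, and universal quantification over $r$-successors alone carries no such grouping information.

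The paper handles this with a genuinely different device. Instead of invoking Lemma~\ref{lem:partial vs complete} inside the induction, it maintains a \emph{general} $\ALC(\Sigma)$ separator $\separ_n$ over all eliminated mosaics and, in the inductive step, forms the set $\mathcal{V}^{+}$ of all Boolean combinations of the formulas $\separ_n(t)$ (over all types $t$). It then uses cover modalities
\[
\nabla_{r}(\Bmc)=\Bigl(\bigsqcap_{C\in \Bmc}\exists r. C\Bigr) \sqcap \forall r.\Bigl(\bigsqcup_{C\in \Bmc}C\Bigr)
\]
over nonempty $\Bmc\subseteq\mathcal{V}^{+}$ to describe, up to $\separ_n$-indistinguishability, the possible ``successor spectra'' of an element. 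For each $t\in T$ it takes $\delta_r(t)$, the disjunction of all consistent $\nabla_r(\Bmc)$'s, and defines $\separ$ on $T$ from these. The correctness argument is then a contradiction: if $t_0\not\sqsubseteq\separ(t_0)$, one fixes the unique $\Bmc$ that the witnessing point satisfies, pulls back interpretations for the other types of $T$, and assembles from the successor types a family $\{T_C\mid C\in\Bmc\}$ of \emph{non-eliminated} mosaics that is shown to be a mosaic partition, contradicting the elimination of $T$. This $\nabla_r$-based encoding and the mosaic-partition reconstruction are precisely the missing ``careful bookkeeping'' your proposal defers, and I do not see how to obtain them from your $\forall r$-over-profiles aggregation. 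The size accounting also differs: the 3-exponential bound in the paper comes from a per-step multiplicative factor $2^{2^{2^{f(m)}}}$ (from $\mathcal{V}^{+}$ and $\nabla_r$) over double-exponentially many elimination steps, not from multiplying a ``triple-exponential profile count'' by a double-exponential Lemma~\ref{lem:partial vs complete} factor.
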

Fix an \ALCQ ontology $\Omc$, $\ALCQ$ concepts $C_{0},D_{0}$, and a signature $\Sigma$. Our algorithm for computing interpolants again relies on a mosaic elimination procedure that determines the mosaics for which there is a model $\Imc$ of $\Omc$ which realizes the types in $T$ in mutually $\Sigma$-bisimilar nodes. To formalize the elimination condition, we need some new notation. Let $m^{\bullet}\in \mathbb{N}$ be maximal such that $(\geq m^{\bullet}\ r.C)$ occurs in $\sub(\Omc,C_{0},D_{0})$ for some $r,C$. Let $N^{\bullet}=\{0,\ldots,m^{\bullet}\}\cup \{\infty\}$, and define $<$ and $+$ on $N^{\bullet}$ as usual by setting, for instance,
$m^{\bullet}<\infty$ and $k+\infty=\infty$. For a role name $r$ and type $t$, a \emph{witnessing function} $w_{r,t}$ assigns to every type $t'$ a $w_{r,t}(t')\in N^{\bullet}$ such that for each $(\geq n\ r.C)\in \sub(\Omc,C_{0},D_{0})$,
$(\geq n\ r.C)\in t$ iff $\sum_{C\in t'} w_{r,t}(t')\geq n$. If $t$ is realizable, then there exists a witnessing function $w_{r,t}$ for each role name $r$: take a model $\Imc$ of $\Omc$ realizing $t$ in a node $d$ and define 
\begin{equation}\label{eq:eqwit}
w_{r,t}(t') =
\begin{cases}
  n & \text{ if $n=|\{ d'\in \Delta^{\Imc}\mid (d,d')\in r^{\Imc}, t'=\tp_{\Imc}(d')\}|\leq m^{\bullet}$ }\\
  \infty & \text{ otherwise.}
\end{cases}
\end{equation}
Let $T$ be a mosaic, $r$ a role name, and $(w_{r,t})_{t\in T}$ be witnessing functions. To satisfy the types in a mosaic in mutually $\Sigma$-bisimilar nodes one must be able to partition, for $r\in \Sigma$, their $r$-successors into mosaics so that the back- and-forth conditions of $\Sigma$-bisimulations hold. Our formalization of this idea follows~\cite{kuijer2025separationdefinabilityfragmentstwovariable}, but we modify the notation for our purposes. Say that a set $\mathcal{S}$ of mosaics is a \emph{mosaic partition for $(w_{r,t})_{t\in T}$} if one can assign to each $t,t'$ with $t\in T$ and $w_{r,t}(t')>0$ a non-empty set $a_{r}(t,t')\subseteq \mathcal{S}$ (intuitively, the mosaics in $\mathcal{S}$ containing $t'$ as an $r$-successor of $t$) with $t'\in T'$ for all $T'\in a_{r}(t,t')$ in such a way that 
\begin{itemize}
    \item for every $T'\in \mathcal{S}$ and $t\in T$, there exists a $t'\in T'$ with $T'\in a_{r}(t,t')$;
    \item for all types $t,t'$, $|a_{r}(t,t')|\leq w_{r,t}(t')$.
\end{itemize}
Let $\Mmc$ be a set of mosaics. A mosaic $T\in \Mmc$ is \emph{bad} if it violates one of the following conditions:
\begin{description}
  \item[(Atomic Consistency)] for every $t,t'\in T$ and $A\in\Sigma$, $A\in t$ iff $A\in t'$;
  \item[(Existential Saturation)] for every role name $r\in\Sigma$ there are witnessing functions $(w_{r,t})_{t\in T}$ and a mosaic partition $\mathcal{S}\subseteq \mathcal{M}$ for $(w_{r,t})_{t\in T}$. 
\end{description}
The following result is shown in~\cite{kuijer2025separationdefinabilityfragmentstwovariable} (using slightly different wording):
\begin{lemma}\label{lem:alcq-joint-consistency}
  (i) If the condition (Existential Saturation) is satisfied for some $T\in \Mmc$, then this is witnessed by a mosaic partition $\mathcal{S}\subseteq \Mmc$ with $|\mathcal{S}| \leq m^{\bullet}\times 2^{2|\sub(\Omc,C_{0},D_{0})|}$.
  
  (ii) $C_{0},D_{0}$ are jointly $\sim_{\ALC,\Sigma}$-consistent under \Omc iff there is a set $\Mmc^*$ of mosaics that does not contain a bad mosaic and such that there is $T\in\Mmc^*$ and $t_1,t_2\in T$ with $C_{0}\in t_1$ and $D_{0}\in t_2$.
\end{lemma}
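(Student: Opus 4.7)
The plan is to address the two parts separately; the backward direction of part~(ii) is the main obstacle.

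For part~(i), the idea is a selection argument. Start from any mosaic partition $\mathcal{S}$ witnessing existential saturation for $T \in \Mmc$ with respect to some $r \in \Sigma$ and witnessing functions $(w_{r,t})_{t \in T}$, and extract a sub-partition $\mathcal{S}' \subseteq \mathcal{S}$ of the desired size. The crucial counting step is that the number of type pairs $(t, t')$ with $t \in T$ is at most $|T| \cdot 2^{|\sub(\Omc, C_0, D_0)|} \leq 2^{2|\sub(\Omc, C_0, D_0)|}$. For each such pair with $w_{r,t}(t') > 0$, the constraint $|a'_r(t, t')| \leq w_{r,t}(t')$ means that when $w_{r,t}(t')$ is finite it is bounded by $m^{\bullet}$, while when $w_{r,t}(t') = \infty$ we can freely truncate $a_r(t, t')$ to any non-empty subset; in either case retaining at most $m^{\bullet}$ mosaics from $a_r(t, t')$ is enough. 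An iterative selection — whenever a mosaic $T'$ is added on behalf of one pair, reuse it as the representative for every other pair $(t, t')$ with $T' \in a_r(t, t')$ whose quota is not yet exhausted — preserves the mosaic-partition conditions and keeps $|\mathcal{S}'| \leq m^{\bullet} \cdot 2^{2|\sub(\Omc, C_0, D_0)|}$.

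For the forward direction of part~(ii), given models $\Imc_1, \Imc_2 \models \Omc$ with $d_1, d_2$ realizing $C_0, D_0$ connected by a $\Sigma$-bisimulation $Z$, associate to each element $e$ of $\Imc_1 \uplus \Imc_2$ reachable from $d_1$ or $d_2$ the class $[e]$ of all elements $\Sigma$-bisimilar to $e$, and set $T_{[e]} = \{\tp_{\Imc_i}(e') : e' \in [e]\}$. Define $\Mmc^* = \{T_{[e]}\}$ together with the required $T \ni t_1, t_2$ coming from the classes of $d_1, d_2$ (which coincide via $Z$). Atomic consistency of each $T_{[e]}$ holds because $\Sigma$-bisimilar elements agree on $\Sigma$-concept names. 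For existential saturation with respect to $r \in \Sigma$ and $t = \tp_{\Imc_i}(e) \in T_{[e]}$, use the count function from Equation~(\ref{eq:eqwit}) applied at $e$ as the witnessing function, and take the mosaics of the bisimulation classes of $r$-successors of $e$ as the mosaic partition; the back-and-forth conditions of $Z$ translate exactly into the coverage and multiplicity constraints.

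The backward direction is the main work and proceeds by a tree-model construction. Given $\Mmc^*$ free of bad mosaics and $T \in \Mmc^*$ with $t_1 \ni C_0$ and $t_2 \ni D_0$, inductively build two trees $\Imc_1, \Imc_2$ whose roots have types $t_1, t_2$ and a shared mosaic label $T$. At a node labelled $(T', t')$: for $r \in \Sigma$, invoke existential saturation of $T'$ to obtain witnessing functions $(w_{r, u})_{u \in T'}$ and a common mosaic partition $\mathcal{S}_r \subseteq \Mmc^*$ (small, by part~(i)), and create, for each pair $(u, T'')$ with $T'' \in a_r(t', u)$, exactly $w_{r, t'}(u)$ many $r$-successors labelled $(T'', u)$ — using the same choice in both trees at bisimilar positions; for $r \notin \Sigma$, take any witnessing function for $t'$ from realizability and generate successors freely, allowing the two trees to diverge. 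The verifications are that each node realizes its type (induction on role depth using that witnessing functions precisely encode the number restrictions in the type), that each $\Imc_i$ is a model of $\Omc$ (immediate from realizability of all types), and that the relation pairing nodes across $\Imc_1, \Imc_2$ with matching mosaic-label histories is a $\Sigma$-bisimulation (atom by atomic consistency; back/forth from the shared $a_r$-structure on $\Sigma$-successors). The subtle point — and the main obstacle — is coordinating the two trees so that a single bisimulation emerges on $\Sigma$-successors while each tree independently realizes the number restrictions on non-$\Sigma$-roles; this is handled by sharing $\mathcal{S}_r, a_r$, and $w_{r, \cdot}$ at bisimilar positions for $r \in \Sigma$, and selecting non-$\Sigma$ witnessing functions independently per tree.
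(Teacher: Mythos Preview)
The paper does not actually prove Lemma~\ref{lem:alcq-joint-consistency}: it is quoted from~\cite{kuijer2025separationdefinabilityfragmentstwovariable} with the remark ``using slightly different wording'', and no argument is given either in the body or the appendix. So there is nothing in the paper to compare your proposal against, and your write-up is in effect a reconstruction of a result the authors import wholesale. With that caveat, your overall strategy (selection argument for~(i), reading off mosaics from bisimulation classes for the forward direction of~(ii), and an unravelling/tree-model construction for the backward direction) is the standard one and is essentially what one would expect the cited proof to do; it also parallels the \ALCH argument the paper sketches in the appendix for Lemma~\ref{lem:alc-joint-consistency}.

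There is one genuine slip in your inductive step for the backward direction of~(ii). You write that at a node labelled $(T',t')$ and for $r\in\Sigma$ you ``create, for each pair $(u,T'')$ with $T''\in a_r(t',u)$, exactly $w_{r,t'}(u)$ many $r$-successors labelled $(T'',u)$''. Read literally this produces $|a_r(t',u)|\cdot w_{r,t'}(u)$ successors of type $u$, which can violate an at-most restriction $\neg(\geq n\ r.C)$ present in $t'$. What you need instead is to create $w_{r,t'}(u)$ successors of type $u$ in total (or $m^{\bullet}+1$ when $w_{r,t'}(u)=\infty$) and \emph{distribute} them among the mosaics in $a_r(t',u)$; this is possible precisely because the second mosaic-partition clause guarantees $|a_r(t',u)|\leq w_{r,t'}(u)$, so every mosaic in $a_r(t',u)$ receives at least one successor. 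With this correction the type-realization and bisimulation arguments go through as you describe. A smaller remark on~(i): your selection argument, once you observe that restricting $\mathcal S$ to $\mathcal S'$ and setting $a'_r(t,t')=a_r(t,t')\cap\mathcal S'$ preserves all the mosaic-partition clauses, actually shows that one mosaic per pair $(t,t')$ suffices, so the $m^{\bullet}$ factor you carry around is not needed for the bound as stated; this only strengthens the conclusion.
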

It is a consequence of Lemma~\ref{lem:alcq-joint-consistency} that joint $\sim_{\ALC,\Sigma}$-consistency under \ALCQ ontologies can be decided in double exponential time. Indeed, an $\Mmc^*$ as in Lemma~\ref{lem:alcq-joint-consistency} can be found (if it exists) by exhaustively eliminating bad mosaics from the set of all mosaics. Since the set of all mosaics is of double exponential size, and each round of the elimination procedure can be performed in  
double exponential time, the upper bound follows. By the link to interpolant existence provided in Lemma~\ref{lem:criterion}, also $\ALC(\Sigma)$ interpolant existence is decidable in double exponential time.

We next exploit the elimination procedure to construct interpolants. 
%We apply the notion of an $\ALC(\Sigma)$-separator for a mosaic $T$ of 
Similar to the previous Section~\ref{sec:ALCH} we compute an $\ALC(\Sigma)$ separator for every eliminated mosaic. It will be convenient to actually compute something slightly stronger.
Let $\mathcal{M}$ be a set of mosaics. A function $\separ$ that maps 
every $t$ in some $T\in \mathcal{M}$ to an $\mathcal{ALC}(\Sigma)$ concept $\separ(t)$ 
is called \emph{general $\mathcal{ALC}(\Sigma)$ separator for $\mathcal{M}$} if for every $T\in \mathcal{M}$ the restriction of $\separ$ to $T$ is an $\mathcal{ALC}(\Sigma)$ separator for $T$. 

\medskip

We compute, by induction, a general $\mathcal{ALC}(\Sigma)$ separator for the set of eliminated mosaics.

\smallskip\textbf{Inductive Base.} Assume $T$ has been eliminated because atomic consistency is violated. Then there exists $A\in \Sigma$ such that the function $\separ_{T}$ defined by setting $\separ_{T}(t)=A$ if $A\in T$ and $\separ_{T}(t)=\neg A$ otherwise, is an $\mathcal{ALC}(\Sigma)$ separator for $T$. Let $\mathcal{E}_{0}$ be the set of all mosaics that violate atomic consistency and let $\separ_{T}$ be defined as above for $T\in \mathcal{E}_{0}$.
Then we obtain a general separator $\separ_{0}$ for $\mathcal{E}_{0}$ by setting $\separ_{0}(t) =\bigsqcap_{T\in \mathcal{E}_{0}}\separ_{T}(t)$, for all $t\in T\in \mathcal{E}_{0}$.

\smallskip\textbf{Inductive Step.} Assume $\mathcal{E}_{n}$ is the set of eliminated mosaics and $\separ_{n}$ is a general separator for $\mathcal{E}_{n}$. Let $\mathcal{M}_{n}$ be the set of mosaics that have not yet been eliminated. 
Setting $\separ_{n}(t)=\top$ for types $t$ that do not occur in any mosaic in $\mathcal{E}_{n}$, we may assume that $\separ_{n}$ is defined for all types.
Let $T$ be the mosaic eliminated in the next step. Then existential saturation is violated in $\mathcal{M}_{n}$. Note that this implies that one can pick a role name $r\in \Sigma$ such that for every witnessing functions $w_{r,t}$, $t\in T$, and mosaic partition $\mathcal{S}$ for $(w_{r,t})_{t\in T}$ there is an eliminated mosaic $T'\in \mathcal{S}\cap \mathcal{E}_{n}$. We fix such an $r$. 

%Let  be the set of types in mosaics in $\mathcal{E}_{n}$.
Let $\Tmc$ be the set of all types. Denote by $\mathcal{V}^{+}$ the set of all conjunctions $C=\bigsqcap_{t\in \Tmc}L_{t}$ with $L_{t}\in \{\separ_{n}(t),\neg\separ_{n}(t)\}$.
For any nonempty subset $\mathcal{B}\subseteq \mathcal{V}^{+}$ we set as usual
 $$
 \nabla_{r}(\Bmc)= (\bigsqcap_{C\in \Bmc}\exists r. C) \sqcap \forall r.(\bigsqcup_{C\in \Bmc}C).
 $$
 Let $\delta_{r}(t)$ be the disjunction of all $\nabla_{r}(\Bmc)$ such that $t \sqcap \nabla_{r}(\Bmc)$ is satisfiable under $\Omc$. Observe that $\Omc\models t \sqsubseteq \delta_{r}(t)$. Take any $t_{0}\in T$ and set $\separ(t_{0})=\neg \bigsqcap_{t\in T\setminus\{t_{0}\}}\delta_{r}(t)$ and $\separ(t)=\delta_{r}(t)$ for all $t\in T\setminus\{t_{0}\}$.
 \begin{lemma}\label{lem:sepalcq}
     $\separ$ is an $\mathcal{ALC}(\Sigma)$ separator for $T$.
 \end{lemma}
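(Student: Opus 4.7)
The plan is to split the argument along the two defining properties of a separator. The inconsistency clause, $\Omc \models \bigsqcap_{t \in T}\separ(t) \sqsubseteq \bot$, is immediate from the definition of $\separ(t_0)$ as the negation of $\bigsqcap_{t \neq t_0}\separ(t)$. For the subsumption clause, the case $t \neq t_0$ reduces to the already-observed fact $\Omc \models t \sqsubseteq \delta_r(t)$: given $d \in t^{\Imc}$ with $\Imc \models \Omc$, the set $\Bmc$ of $\mathcal{V}^+$-concepts realized among $d$'s $r$-successors makes $\nabla_r(\Bmc)$ true at $d$ and makes $t \sqcap \nabla_r(\Bmc)$ satisfiable under $\Omc$, so $\nabla_r(\Bmc)$ is one of the disjuncts of $\delta_r(t)$.

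The core task is the case $t = t_0$, for which the plan is to show that $t_0 \sqcap \bigsqcap_{t \neq t_0}\delta_r(t)$ is unsatisfiable under $\Omc$. The approach is by contradiction. Fix a model $\Imc$ and $d \in \Delta^{\Imc}$ realizing this concept; using $\Omc \models t_0 \sqsubseteq \delta_r(t_0)$, this gives $d \models \delta_r(t)$ for every $t \in T$. Because the members of $\mathcal{V}^+$ partition the universe, the $r$-successor structure of $d$ identifies a single $\Bmc \subseteq \mathcal{V}^+$ with $d \models \nabla_r(\Bmc)$, and no other $\nabla_r(\Bmc')$ is true at $d$; hence this common $\Bmc$ is the disjunct witnessing each $\delta_r(t)$, and $t \sqcap \nabla_r(\Bmc)$ is satisfiable under $\Omc$ for every $t \in T$. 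For each such $t$ I would fix witnesses $\Imc^t, d^t$, and for each $C \in \Bmc$ define the mosaic $T_C$ as the set of types realized by $r$-successors of some $d^t$ that satisfy $C$.

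The main obstacle, and the step that uses the inductive hypothesis about $\separ_n$ in an essential way, is to argue that every $T_C$ lies in $\mathcal{M}_n$, not in $\mathcal{E}_n$. Suppose for contradiction $T_C \in \mathcal{E}_n$; then $\bigsqcap_{\tau \in T_C}\separ_n(\tau)$ is unsatisfiable under $\Omc$. On the other hand every $\tau \in T_C$ is realized inside $C$ by some element $e$, and since $e$ has type $\tau$ it must satisfy $\separ_n(\tau)$, which pins down the literal $L_\tau$ inside $C$ as the positive one, $\separ_n(\tau)$ rather than $\neg\separ_n(\tau)$. Doing this for every $\tau \in T_C$ yields that any element satisfying $C$ realizes $\bigsqcap_{\tau \in T_C}\separ_n(\tau)$, which is impossible because the $\exists r.C$ conjunct of $\nabla_r(\Bmc)$ at each $d^t$ produces such elements.

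With this fact in hand, I would finish by extracting a valid mosaic partition that contradicts the choice of $r$: define $w_{r,t}(t')$ as the count (capped at $m^\bullet$, $\infty$ above) of $r$-successors of $d^t$ of type $t'$ in $\Imc^t$, set $\mathcal{S} = \{T_C : C \in \Bmc\} \subseteq \mathcal{M}_n$, and let $a_r(t,t')$ list those $T_C$ populated by a type-$t'$ successor of $d^t$. Verification of the mosaic-partition axioms is routine from the structure of $\nabla_r(\Bmc)$ at the $d^t$, and yields a mosaic partition for $T$ at $r$ entirely within $\mathcal{M}_n$, contradicting the assumption under which $T$ was eliminated.
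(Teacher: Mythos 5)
Your proof is correct and essentially identical to the paper's: you reduce to showing $\Omc\models t_0\sqsubseteq\separ(t_0)$, extract the unique $\Bmc\subseteq\mathcal V^+$ realized by a hypothetical counterexample point, build the mosaics $T_C$ from the successor types under each $C\in\Bmc$, rule out $T_C\in\mathcal E_n$ via the positivity of the $\separ_n$-literals occurring in $C$, and assemble a mosaic partition contained in $\mathcal M_n$, contradicting the choice of $r$. The only cosmetic difference is that the paper explicitly merges the witnessing models into a single disjoint union $\Imc$ before defining $w_{r,t}$ and $T_C$, whereas you keep separate $\Imc^t,d^t$; the substance is the same.
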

 \begin{proof}
 It suffices to show that $\Omc \models t_{0} \sqsubseteq \separ(t_{0})$, the remaining conditions are trivial. Assume this is not the case. Then we have a model $\Imc_{t_{0}}$ of $\Omc$ and some $d_{t_{0}}\in \Delta^{\Imc_{t_{0}}}$ such that $d_{t_{0}}\in (t_{0} \sqcap (\bigsqcap_{t\in T\setminus \{t_{0}\}}\delta_{r}(t))^{\Imc_{t_{0}}}$.
 Pick the (unique) $\Bmc\subseteq \mathcal{V}^{+}$ such that $d_{t_{0}}\in \nabla_r(\Bmc)^{\Imc_{t_{0}}}$. Then $\nabla_r(\Bmc)$ is a disjunct
 of each $\delta_{r}(t)$ with $t\in T$. Take interpretations $\Imc_{t}$ and $d_{t}$ with $d_{t}\in (t\sqcap \nabla_r(\Bmc))^{\Imc_{t}}$ for $t\in T\setminus \{t_{0}\}$. We may assume that all $\Imc_{t}$, $t\in T$, coincide (otherwise take their disjoint union) and denote it by $\Imc$.
 Next define, for $t\in T$, $w_{r,t}(t')$ as in \eqref{eq:eqwit}.
%\[
%w_{r,t}(t') =
%\begin{cases}
%  n & \text{ if $n=|\{ d'\in \Delta^{\Imc}\mid (d,d')\in r^{\Imc}, t'=\tp_{\Imc}(d')\}|\leq m^{\bullet}$ }\\
%  \infty & \text{ otherwise.}
%\end{cases}
%\]
Then $(w_{r,t})_{t\in T}$ is a family of witnessing functions. Let for each $C\in \Bmc$:
\[
T_{C} = \{ \tp_\Imc(d) \mid \text{ there is $t\in T$ with $(d_{t},d)\in r^{\Imc}$ and $d\in C^{\Imc}$}\}
\]
By the definition of \Bmc, $T_{C}\not=\emptyset$ for every $C\in\Bmc$. Observe that none of the mosaics $T_{C}$ is in $\mathcal{E}_{n}$ because otherwise $\Omc\models \bigsqcap_{t\in T_{C}}\separ_{n}(t)\sqsubseteq \bot$ which is not the case
since $\separ_{n}(t)$ is a conjunct of $C$ for all $t\in T_{C}$. 

We show that $\{T_{C}\mid C\in \Bmc\}$ form a mosaic partition for $(w_{r,t})_{t\in T}$, and so derive a contradiction. To this end define $a_{r}(t,t')\subseteq \{T_{C} \mid C \in \Bmc\}$ as
\[
\{T_{C} \mid \text{ there is $d\in C^{\Imc}$ with $(d_{t},d)\in r^{\Imc}$
and $\tp_{\Imc}(d)=t'$}\}
\]
To see that $a_{r}(t,t')$ is as required, first observe that $t'\in T_{C}$ for any $T_{C}\in a_{r}(t,t')$. Next assume that a $T_{C}$ with $C\in \Bmc$ and $t\in T$ are given.
From $d_{t}\in (t\sqcap \nabla_r(\Bmc))^{\Imc}$ we obtain a $d$ with $(d_{t},d)\in r^{\Imc}$ and $d\in C^{\Imc}$. Let $t'=\tp_{\Imc}(d)$. Then $T_{C}\in a_{r}(t,t')$. The condition $|a_{r}(t,t')|\leq w_{r.t}(t')$ follows directly from the definitions.
\end{proof}
Using Lemma~\ref{lem:sepalcq} we obtain a general $\ALC(\Sigma)$ separator $\separ_{n+1}$ for $\mathcal{E}_{n}\cup\{T\}$ by setting $\separ_{n+1}(t)=\separ(t) \sqcap \separ_{n}(t)$ for all $t\in T$
and $\separ_{n+1}(t)=\separ_{n}(t)$ for all remaining types. 

\medskip

This finishes the construction of separators for every eliminated mosaic. One can now construct
the actual interpolants in exactly the same way as in the previous section for \ALCH via Lemma~\ref{lem:partial vs complete}. To compute the size of the interpolants, observe that in the construction above $\size{\separ_{n+1}(t)}\leq \size{\separ_{n}(t)} \times 2^{2^{2^{f(m)}}}$ with $f$ a polynomial function and $m=\size{\Omc} + \size{C_{0}} + \size {D_{0}}$. As $\separ_{n}$ stabilizes
after at most double exponentially many elimination steps, $\size{\separ_{n}}$ is bound by a 
3-exponential function in $\size{\Omc} + \size{C_{0}} + \size {D_{0}}$. This bound remains 3-exponential under DAG representation. The construction of $\separ_{n}(t)$ involves satisfiability checks for concepts of 3-exponential size, so overall the interpolant can be constructed in 4-exponential time.
\section{Conclusion}

We have presented first non-trivial algorithms for computing \ALC interpolants under \ALCH and \ALCQ ontologies, relying on the new notion of polyadic separators tailored to store witnesses for the fact that a mosaic (or set of types) cannot be realized in mutually bisimilar models. Theorems~\ref{thm:construction ALCH} and~\ref{thm:constructionALCQ} demonstrate the inherent difficulty of the problem and explain why previously known techniques do not easily apply in the cases of \ALCH and \ALCQ. It is worth to note that Theorem~\ref{thm:construction ALCH} can be easily modified to obtain non-elementary lower bounds for the size of uniform interpolants at the concept level in the presence of \ALCH ontologies. These lower bounds, in turn, translate to non-elementary lower bounds for the size of uniform interpolants at the ontology level in \ALCH. This implies that the resolution based calculus for computing uniform interpolants of \ALCH ontologies from~\cite{Koopmann2014CountAF} cannot run in elementary time, answering a question posed by the authors. 

In the future, we would like to extend our algorithms to other standard DL constructors. While we believe that this will be rather easy for some constructors, like inverse roles or the universal role, we expect it to be much more involved in other cases such as nominals. In fact, already unifying the two presented algorithms into one for computing \ALC interpolants under \ALCHQ ontologies appears to be challenging.
It would be also interesting to analyze our procedures in the ontology-free cases (or with an ontology containing only role inclusions), for which we expect smaller interpolants. 
From a practical perspective, it would be interesting to extend implemented tableaux algorithms to be able to compute interpolants. Beyond description logics, it would be very interesting to compute interpolants in the guarded and/or the two-variable fragments of first-order logic~\cite{DBLP:conf/lics/JungW21} or in first-order modal logics~\cite{DBLP:conf/kr/KuruczWZ23}.  
% 
% \begin{itemize}
% \item discuss case without ontologies
%     \item lower bounds on the sizes?
%     \item DAG size
%     \item inverses and other "standard" constructors like universal role
%     \item combining the two
%     \item devise tableau-version of the algorithm for practical applications
%     \item nominals
%     \item GF and FO$^2$
% \end{itemize}
% %% Define the bibliography file to be used
% 
\newpage

\bibliography{main}

\begin{thebibliography}{21}
\expandafter\ifx\csname natexlab\endcsname\relax\def\natexlab#1{#1}\fi
\providecommand{\url}[1]{\texttt{#1}}
\providecommand{\href}[2]{#2}
\providecommand{\path}[1]{#1}
\providecommand{\DOIprefix}{doi:}
\providecommand{\ArXivprefix}{arXiv:}
\providecommand{\URLprefix}{URL: }
\providecommand{\Pubmedprefix}{pmid:}
\providecommand{\doi}[1]{\href{http://dx.doi.org/#1}{\path{#1}}}
\providecommand{\Pubmed}[1]{\href{pmid:#1}{\path{#1}}}
\providecommand{\bibinfo}[2]{#2}
\ifx\xfnm\relax \def\xfnm[#1]{\unskip,\space#1}\fi
%Type = Inproceedings
\bibitem[{ten Cate et~al.(2006)ten Cate, Conradie, Marx, and Venema}]{TenEtAl06}
\bibinfo{author}{B.~ten Cate}, \bibinfo{author}{W.~Conradie}, \bibinfo{author}{M.~Marx}, \bibinfo{author}{Y.~Venema},
\newblock \bibinfo{title}{Definitorially complete description logics},
\newblock in: \bibinfo{booktitle}{Proceedings of the 10th International Conference on Principles of Knowledge Representation and Reasoning, {KR} 2006}, \bibinfo{publisher}{{AAAI} Press}, \bibinfo{year}{2006}, pp. \bibinfo{pages}{79--89}. \URLprefix \url{http://www.aaai.org/Library/KR/2006/kr06-011.php}.
%Type = Inproceedings
\bibitem[{Artale et~al.(2021)Artale, Mazzullo, Ozaki, and Wolter}]{ArtEtAl21}
\bibinfo{author}{A.~Artale}, \bibinfo{author}{A.~Mazzullo}, \bibinfo{author}{A.~Ozaki}, \bibinfo{author}{F.~Wolter},
\newblock \bibinfo{title}{On free description logics with definite descriptions},
\newblock in: \bibinfo{booktitle}{Proceedings of the 18th International Conference on Principles of Knowledge Representation and Reasoning, {KR} 2021}, \bibinfo{year}{2021}, pp. \bibinfo{pages}{63--73}. \URLprefix \url{https://doi.org/10.24963/kr.2021/7}. \DOIprefix\doi{10.24963/kr.2021/7}.
%Type = Inproceedings
\bibitem[{Toman and Weddell(2021)}]{TomWed21}
\bibinfo{author}{D.~Toman}, \bibinfo{author}{G.~E. Weddell},
\newblock \bibinfo{title}{{{FO} Rewritability for {OMQ} using Beth Definability and Interpolation}},
\newblock in: \bibinfo{booktitle}{Proceedings of the 34th International Workshop on Description Logics, {DL} 2021}, \bibinfo{publisher}{CEUR-WS.org}, \bibinfo{year}{2021}. \URLprefix \url{http://ceur-ws.org/Vol-2954/paper-29.pdf}.
%Type = Article
\bibitem[{Jung et~al.(2022)Jung, Lutz, Pulcini, and Wolter}]{DBLP:journals/ai/JungLPW22}
\bibinfo{author}{J.~C. Jung}, \bibinfo{author}{C.~Lutz}, \bibinfo{author}{H.~Pulcini}, \bibinfo{author}{F.~Wolter},
\newblock \bibinfo{title}{Logical separability of labeled data examples under ontologies},
\newblock \bibinfo{journal}{Artif. Intell.} \bibinfo{volume}{313} (\bibinfo{year}{2022}) \bibinfo{pages}{103785}. \URLprefix \url{https://doi.org/10.1016/j.artint.2022.103785}. \DOIprefix\doi{10.1016/j.artint.2022.103785}.
%Type = Inproceedings
\bibitem[{Jung et~al.(2021)Jung, Lutz, Pulcini, and Wolter}]{DBLP:conf/kr/JungLPW21}
\bibinfo{author}{J.~C. Jung}, \bibinfo{author}{C.~Lutz}, \bibinfo{author}{H.~Pulcini}, \bibinfo{author}{F.~Wolter},
\newblock \bibinfo{title}{Separating data examples by description logic concepts with restricted signatures},
\newblock in: \bibinfo{editor}{M.~Bienvenu}, \bibinfo{editor}{G.~Lakemeyer}, \bibinfo{editor}{E.~Erdem} (Eds.), \bibinfo{booktitle}{Proceedings of the 18th International Conference on Principles of Knowledge Representation and Reasoning, {KR} 2021, Online event, November 3-12, 2021}, \bibinfo{year}{2021}, pp. \bibinfo{pages}{390--399}. \URLprefix \url{https://doi.org/10.24963/kr.2021/37}. \DOIprefix\doi{10.24963/KR.2021/37}.
%Type = Article
\bibitem[{ten Cate et~al.(2013)ten Cate, Franconi, and Seylan}]{TenEtAl13}
\bibinfo{author}{B.~ten Cate}, \bibinfo{author}{E.~Franconi}, \bibinfo{author}{I.~Seylan},
\newblock \bibinfo{title}{Beth definability in expressive description logics},
\newblock \bibinfo{journal}{J. Artif. Intell. Res.} \bibinfo{volume}{48} (\bibinfo{year}{2013}) \bibinfo{pages}{347--414}.
%Type = Inproceedings
\bibitem[{Fortin et~al.(2022)Fortin, Konev, and Wolter}]{KR2022-16}
\bibinfo{author}{M.~Fortin}, \bibinfo{author}{B.~Konev}, \bibinfo{author}{F.~Wolter},
\newblock \bibinfo{title}{{Interpolants and Explicit Definitions in Extensions of the Description Logic EL}},
\newblock in: \bibinfo{booktitle}{{Proceedings of the 19th International Conference on Principles of Knowledge Representation and Reasoning}}, \bibinfo{year}{2022}, pp. \bibinfo{pages}{152--162}. \URLprefix \url{https://doi.org/10.24963/kr.2022/16}. \DOIprefix\doi{10.24963/kr.2022/16}.
%Type = Inproceedings
\bibitem[{Lyon and Karge(2024)}]{DBLP:conf/ijcai/LyonK24}
\bibinfo{author}{T.~S. Lyon}, \bibinfo{author}{J.~Karge},
\newblock \bibinfo{title}{Constructive interpolation and concept-based beth definability for description logics via sequents},
\newblock in: \bibinfo{booktitle}{Proceedings of the Thirty-Third International Joint Conference on Artificial Intelligence, {IJCAI-24}}, \bibinfo{publisher}{International Joint Conferences on Artificial Intelligence Organization}, \bibinfo{year}{2024}, pp. \bibinfo{pages}{3484--3492}. \URLprefix \url{https://doi.org/10.24963/ijcai.2024/386}. \DOIprefix\doi{10.24963/ijcai.2024/386}, \bibinfo{note}{main Track}.
%Type = Inproceedings
\bibitem[{Koopmann and Schmidt(2015)}]{DBLP:conf/aaai/KoopmannS15}
\bibinfo{author}{P.~Koopmann}, \bibinfo{author}{R.~A. Schmidt},
\newblock \bibinfo{title}{Uniform interpolation and forgetting for {ALC} ontologies with aboxes},
\newblock in: \bibinfo{editor}{B.~Bonet}, \bibinfo{editor}{S.~Koenig} (Eds.), \bibinfo{booktitle}{Proceedings of the Twenty-Ninth {AAAI} Conference on Artificial Intelligence, January 25-30, 2015, Austin, Texas, {USA}}, \bibinfo{publisher}{{AAAI} Press}, \bibinfo{year}{2015}, pp. \bibinfo{pages}{175--181}. \DOIprefix\doi{10.1609/AAAI.V29I1.9206}.
%Type = Article
\bibitem[{Sofronie{-}Stokkermans(2008)}]{DBLP:journals/lmcs/Sofronie-Stokkermans08}
\bibinfo{author}{V.~Sofronie{-}Stokkermans},
\newblock \bibinfo{title}{Interpolation in local theory extensions},
\newblock \bibinfo{journal}{Log. Methods Comput. Sci.} \bibinfo{volume}{4} (\bibinfo{year}{2008}). \URLprefix \url{https://doi.org/10.2168/LMCS-4(4:1)2008}. \DOIprefix\doi{10.2168/LMCS-4(4:1)2008}.
%Type = Article
\bibitem[{Benedikt et~al.(2016)Benedikt, ten Cate, and {Vanden Boom}}]{DBLP:journals/tocl/BenediktCB16}
\bibinfo{author}{M.~Benedikt}, \bibinfo{author}{B.~ten Cate}, \bibinfo{author}{M.~{Vanden Boom}},
\newblock \bibinfo{title}{Effective interpolation and preservation in guarded logics},
\newblock \bibinfo{journal}{{ACM} Trans. Comput. Log.} \bibinfo{volume}{17} (\bibinfo{year}{2016}) \bibinfo{pages}{8:1--8:46}.
%Type = Article
\bibitem[{Fitting and Kuznets(2015)}]{DBLP:journals/apal/FittingK15}
\bibinfo{author}{M.~Fitting}, \bibinfo{author}{R.~Kuznets},
\newblock \bibinfo{title}{Modal interpolation via nested sequents},
\newblock \bibinfo{journal}{Ann. Pure Appl. Log.} \bibinfo{volume}{166} (\bibinfo{year}{2015}) \bibinfo{pages}{274--305}. \URLprefix \url{https://doi.org/10.1016/j.apal.2014.11.002}. \DOIprefix\doi{10.1016/J.APAL.2014.11.002}.
%Type = Techreport
\bibitem[{ten Cate(2022)}]{baldernote}
\bibinfo{author}{B.~ten Cate}, \bibinfo{title}{Lyndon Interpolation for Modal Logic via Type Elimination Sequences}, \bibinfo{type}{Technical Report}, ILLC, Amsterdam, \bibinfo{year}{2022}.
%Type = Inproceedings
\bibitem[{Jung and Wolter(2021)}]{DBLP:conf/lics/JungW21}
\bibinfo{author}{J.~C. Jung}, \bibinfo{author}{F.~Wolter},
\newblock \bibinfo{title}{Living without beth and craig: Definitions and interpolants in the guarded and two-variable fragments},
\newblock in: \bibinfo{booktitle}{Proceedings of the 36th Annual {ACM/IEEE} Symposium on Logic in Computer Science, {LICS} 2021}, \bibinfo{publisher}{{IEEE}}, \bibinfo{year}{2021}, pp. \bibinfo{pages}{1--14}. \URLprefix \url{https://doi.org/10.1109/LICS52264.2021.9470585}. \DOIprefix\doi{10.1109/LICS52264.2021.9470585}.
%Type = Article
\bibitem[{Artale et~al.(2023)Artale, Jung, Mazzullo, Ozaki, and Wolter}]{DBLP:journals/tocl/ArtaleJMOW23}
\bibinfo{author}{A.~Artale}, \bibinfo{author}{J.~C. Jung}, \bibinfo{author}{A.~Mazzullo}, \bibinfo{author}{A.~Ozaki}, \bibinfo{author}{F.~Wolter},
\newblock \bibinfo{title}{Living without beth and craig: Definitions and interpolants in description and modal logics with nominals and role inclusions},
\newblock \bibinfo{journal}{{ACM} Trans. Comput. Log.} \bibinfo{volume}{24} (\bibinfo{year}{2023}) \bibinfo{pages}{34:1--34:51}. \URLprefix \url{https://doi.org/10.1145/3597301}. \DOIprefix\doi{10.1145/3597301}.
%Type = Inproceedings
\bibitem[{Kurucz et~al.(2023)Kurucz, Wolter, and Zakharyaschev}]{DBLP:conf/kr/KuruczWZ23}
\bibinfo{author}{A.~Kurucz}, \bibinfo{author}{F.~Wolter}, \bibinfo{author}{M.~Zakharyaschev},
\newblock \bibinfo{title}{Definitions and (uniform) interpolants in first-order modal logic},
\newblock in: \bibinfo{booktitle}{Proceedings of the 20th International Conference on Principles of Knowledge Representation and Reasoning, {KR} 2023, Rhodes, Greece, September 2-8, 2023}, \bibinfo{year}{2023}, pp. \bibinfo{pages}{417--428}. \URLprefix \url{https://doi.org/10.24963/kr.2023/41}. \DOIprefix\doi{10.24963/KR.2023/41}.
%Type = Incollection
\bibitem[{Konev et~al.(2009)Konev, Lutz, Walther, and Wolter}]{KonevLWW09}
\bibinfo{author}{B.~Konev}, \bibinfo{author}{C.~Lutz}, \bibinfo{author}{D.~Walther}, \bibinfo{author}{F.~Wolter},
\newblock \bibinfo{title}{Formal properties of modularisation},
\newblock in: \bibinfo{editor}{H.~Stuckenschmidt}, \bibinfo{editor}{C.~Parent}, \bibinfo{editor}{S.~Spaccapietra} (Eds.), \bibinfo{booktitle}{Modular Ontologies: Concepts, Theories and Techniques for Knowledge Modularization}, volume \bibinfo{volume}{5445} of \textit{\bibinfo{series}{Lecture Notes in Computer Science}}, \bibinfo{publisher}{Springer}, \bibinfo{year}{2009}, pp. \bibinfo{pages}{25--66}. \URLprefix \url{https://doi.org/10.1007/978-3-642-01907-4\_3}. \DOIprefix\doi{10.1007/978-3-642-01907-4\_3}.
%Type = Misc
\bibitem[{Kuijer et~al.(2025)Kuijer, Tan, Wolter, and Zakharyaschev}]{kuijer2025separationdefinabilityfragmentstwovariable}
\bibinfo{author}{L.~Kuijer}, \bibinfo{author}{T.~Tan}, \bibinfo{author}{F.~Wolter}, \bibinfo{author}{M.~Zakharyaschev}, \bibinfo{title}{Separation and definability in fragments of two-variable first-order logic with counting}, \bibinfo{year}{2025}. \URLprefix \url{https://arxiv.org/abs/2504.20491}. \href{http://arxiv.org/abs/2504.20491}{{\tt arXiv:2504.20491}}.
%Type = Book
\bibitem[{Baader et~al.(2017)Baader, Horrocks, Lutz, and Sattler}]{DL-Textbook}
\bibinfo{author}{F.~Baader}, \bibinfo{author}{I.~Horrocks}, \bibinfo{author}{C.~Lutz}, \bibinfo{author}{U.~Sattler}, \bibinfo{title}{An Introduction to Description Logic}, \bibinfo{publisher}{Cambridge University Press}, \bibinfo{year}{2017}.
%Type = Inproceedings
\bibitem[{Jung and Kolodziejski(2025)}]{DBLP:conf/stacs/JungK25}
\bibinfo{author}{J.~C. Jung}, \bibinfo{author}{J.~Kolodziejski},
\newblock \bibinfo{title}{Modal separation of fixpoint formulae},
\newblock in: \bibinfo{editor}{O.~Beyersdorff}, \bibinfo{editor}{M.~Pilipczuk}, \bibinfo{editor}{E.~Pimentel}, \bibinfo{editor}{K.~T. Nguyen} (Eds.), \bibinfo{booktitle}{42nd International Symposium on Theoretical Aspects of Computer Science, {STACS} 2025, March 4-7, 2025, Jena, Germany}, volume \bibinfo{volume}{327} of \textit{\bibinfo{series}{LIPIcs}}, \bibinfo{publisher}{Schloss Dagstuhl - Leibniz-Zentrum f{\"{u}}r Informatik}, \bibinfo{year}{2025}, pp. \bibinfo{pages}{55:1--55:20}. \DOIprefix\doi{10.4230/LIPICS.STACS.2025.55}.
%Type = Inproceedings
\bibitem[{Koopmann and Schmidt(2014)}]{Koopmann2014CountAF}
\bibinfo{author}{P.~Koopmann}, \bibinfo{author}{R.~Schmidt},
\newblock \bibinfo{title}{Count and forget: Uniform interpolation of shq-ontologies},
\newblock in: \bibinfo{booktitle}{International Joint Conference on Automated Reasoning}, \bibinfo{year}{2014}. \URLprefix \url{https://api.semanticscholar.org/CorpusID:38753229}.

\end{thebibliography}

\newpage

%%
%% If your work has an appendix, this is the place to put it.
\appendix
\section{Missing Proof Details}

\lemalcjointconsistency*

\begin{proof}[Proof (Sketch)]
For implication ``$\Rightarrow$'', suppose that $C_0,D_0$ are jointly $\sim_{\ALC,\Sigma}$-consistent under \Omc, that is, there are models $\Imc_1,\Imc_2$ of \Omc and elements $d_1\in C_0^{\Imc_1}$ and $d_2\in D_0^{\Imc_2}$ such that $\Imc_1,d_1\sim_{\ALC,\Sigma} \Imc_2,d_2$. Since \ALCH is preserved under taking disjoint unions, we can assume without loss of generality that $\Imc_1=\Imc_2=\Imc$. We read off a set $\Mmc^*$ of mosaics by taking
\[\Mmc^* = \{\ \{\tp_\Imc(e)\mid \Imc,d\sim_{\ALC,\Sigma} \Imc,e\}\ \mid d\in \Delta^\Imc\}.\]
It is routine to verify that $\Mmc^*$ satisfies the conditions formulated in Lemma~\ref{lem:alc-joint-consistency}. 

For implication ``$\Leftarrow$'', let $\Mmc^*$ be a set of mosaics that does not contain a bad mosaic and such that there is $T^*\in\Mmc^*$ and $t_1,t_2\in T^*$ with $C\in t_1$ and $D\in t_2$. We construct an interpretation \Imc as follows: 
\begin{align*}
    \Delta^{\Imc} & = \{(t,T)\mid T\in \Mmc^*\text{ and }t\in T\} \\
    A^{\Imc} & = \{(t,T)\in \Delta^\Imc\mid A\in t\} \\
    r^{\Imc} & = \{( (t,T),(t',T') )\in \Delta^\Imc\times \Delta^\Imc\mid t\leadsto_r t'\text{ and for all $s\in \Sigma$: }((\Omc\models r\sqsubseteq s) \Rightarrow T\leadsto_sT')\}
\end{align*}
One can verify by structural induction that $C\in (t,T)$ iff $(t,T)\in C^\Imc$, for all $C\in \sub(\Omc,C_0,D_0)$ and $(t,T)\in \Delta^\Imc$. Consequently, $(t_1,T^*)\in C_0^\Imc$ and $(t_2,T^*)\in D_0^\Imc$. Moreover, following relation $Z$:
\[Z=\{(t,T),(t',T)\mid T\in \Mmc^*\}\]
is a $\Sigma$-bisimulation. Since $((t_1,T^*), (t_2,T^*))\in Z$, we conclude that $C_0,D_0$ are jointly $\sim_{\ALC,\Sigma}$-consistent under \Omc.
\end{proof}
\end{document}